\documentclass[10pt,journal,compsoc]{IEEEtran}
\usepackage{graphicx,url, amsmath, arrayjobx}
\usepackage{float}
\usepackage[nocompress]{cite}
\usepackage{amsthm, amssymb}
\usepackage{soul}
\usepackage{color}
\usepackage{algorithmic}
\usepackage{caption}
\usepackage{setspace}
\usepackage{subfigure}

\usepackage[linesnumbered, boxed, commentsnumbered, ruled]{algorithm2e}
\usepackage{amsfonts}
\newtheorem{defn}{Definition}[section]

\newtheorem{lemma}{Lemma}[section]
\newtheorem{theorem}{Theorem}[section]

%\linespread{2}

\newcommand{\RNum}[1]{\uppercase\expandafter{\romannumeral #1\relax}}

\allowdisplaybreaks[1]
\ifCLASSOPTIONcompsoc
  % IEEE Computer Society needs nocompress option
  % requires cite.sty v4.0 or later (November 2003)
  \usepackage[nocompress]{cite}
\else
  % normal IEEE
  \usepackage{cite}
\fi

% *** GRAPHICS RELATED PACKAGES ***
%
\ifCLASSINFOpdf
\else
\fi
\hyphenation{}

\begin{document}
%\doublespacing
\raggedbottom
\title{Design and Evaluation of A Data Partitioning-Based Intrusion Management Architecture for Database Systems}

%\author{Muhamad~Felemban,~\IEEEmembership{Student Member,~IEEE,} Yahya Javed, Jason Kobes, Thamir Qadah \\
%        Arif~Ghafoor,~\IEEEmembership{Fellow,~IEEE,} and Walid Aref,~\IEEEmembership{Fellow,~IEEE}% <-this % stops a space
\author{Muhamad~Felemban, Yahya Javed, Jason Kobes, Thamir Qadah \\
        Arif~Ghafoor, and Walid Aref% <-this % stops a space
\IEEEcompsocitemizethanks{\IEEEcompsocthanksitem Muhamad Felemban, Yahya Javed, Thamir Qadah, and Arif Ghafoor are with the School of Electrical and Computer Engineering at Purdue University, West Lafayette, IN 47906..\protect\\
}
\IEEEcompsocitemizethanks{\IEEEcompsocthanksitem Walid Aref is with the School of Computer Science at Purdue University, West Lafayette, IN 47906..\protect\\
}
\IEEEcompsocitemizethanks{\IEEEcompsocthanksitem Jason Kobes is with Northrop Grumman \protect\\
}
}

% The paper headers
\markboth{}%
{Shell \MakeLowercase{\textit{et al.}}: Partition-Based Intrusion Management for OLTP  Systems }
\IEEEtitleabstractindextext{%
\begin{abstract}
Data-intensive applications exhibit increasing reliance on Database Management Systems (DBMSs, for short). With the growing cyber-security threats to government and commercial infrastructures, the need to develop high resilient cyber systems is becoming increasingly important. Cyber-attacks on DBMSs include intrusion attacks that may result in severe degradation in performance. Several efforts have been directed towards designing an integrated management system to detect, respond, and recover from malicious attacks. In this paper, we propose a data \underline{P}artitioning-based \underline{I}ntrusion \underline{M}anagement \underline{S}ystem (PIMS, for short) that can endure intense malicious intrusion attacks on DBMS. The novelty in PIMS is the ability to contain the damage into data partitions, termed Intrusion Boundaries (IBs, for short). The IB Demarcation Problem (IBDP, for short) is formulated as a mixed integer nonlinear programming. We prove that IBDP is NP-hard. Accordingly, two heuristic solutions for IBDP are introduced. The proposed architecture for PIMS includes novel IB-centric response and recovery mechanisms, which executes compensating transactions. PIMS is prototyped within PostgreSQL, an open-source DBMS. Finally, empirical and experimental performance evaluation of PIMS are conducted to demonstrate that intelligent partitioning of data tuples improves the overall availability of the DBMS under intrusion attacks.
\end{abstract}

% Note that keywords are not normally used for peerreview papers.
\begin{IEEEkeywords}
Database systems, Intrusion management, Cost-driven optimization, Intrusion detection
\end{IEEEkeywords}}

% make the title area
\maketitle

\IEEEdisplaynontitleabstractindextext
\IEEEpeerreviewmaketitle

\section{Introduction}
%\IEEEraisesectionheading{\section{Introduction}\label{sec:introduction}}

Data-intensive applications exhibit increasing reliance on efficient and scalable Database Management Systems (DBMSs). Examples of these applications are abound in the domain of banking, manufacturing, health care, and enterprise applications \cite{chen2014data}. Since data is the most valuable asset in organizations, it is crucial to design attack-resilient DBMSs to protect the confidentiality, integrity, and availability of the data in the presence of Cyber attacks \cite{stoneburner2002sp,bertino2005database}. Although research in database security has made significant progress in protecting from Cyber attacks, applications and infrastructures are still exposed to a large number of vulnerabilities. Even a single intrusion can cause catastrophic cascading effects due to data dependency and application interoperability. Therefore, a holistic approach for designing an intrusion management mechanism that includes intrusion detection, response, and recovery is needed \cite{kamra2011design,ammann2002recovery}.

An Intrusion Detection System (IDS) is integrated with the DBMS to prevent Cyber attacks. The objective of an IDS is to monitor and detect illegal access and malicious actions that take place in the database. However, an IDS is not designed to repair the damages caused by successful attacks. IDS is often integrated with intrusion response and recovery mechanisms to alleviate the damage caused by the malicious attacks \cite{kamra2011design}. Several efforts have been directed towards developing dynamic damage tracking approach to perform on-the-fly damage repair, for example intrusion-tolerant database systems \cite{liu2004design,ammann2002recovery}. However, such systems have limitations in the ability to maintain high availability under severe intrusion attacks. One of the limitations for these systems is the prolonged recovery time due inter-transaction dependency. 

{\bf{Motivating Example:}} \textit{Fig. \ref{fig:motsc} gives an example scenario for a Banking system with three benign users (B,C, and D), and a malevolent user (A). User A executes a malicious transaction that updates accounts $X$ and $Y$ with incorrect amounts of money as illustrated in Fig. \ref{fig:motsc}. Then, Users B and C withdraw from accounts $X$ and $Y$, respectively. IDS detects the malicious transaction executed by User A and triggers an alert to the database security administrator that temporarily blocks the incoming transactions and starts the recovery procedure. Meanwhile, User D attempts to access account $Z$. However, this request is denied until the damage is recovered. When the recovery transaction is finished, the accounts of Users A and B are compensated, i.e., withdraws money from account $X$ and credits account $Y$.}

In the above example, the recovery time depends on the number of dependent transactions that are executed before the IDS detects the malicious transaction. Consequently, the availability of the DBMS is impaired when the recovery procedure takes a long duration. Therefore, it is important to contain the damage once malicious transactions are detected. Containment of the damage can be achieved by tracking the inter-transaction dependencies and devising a fast confinement strategy to contain the damage.

In this paper, we propose a new real-time response and recovery architecture, termed Partition-based Intrusion Management System (PIMS), for DBMSs. We assume that existing IDS, e.g., \cite{kamra2011design,shebaro2013postgresql}, can be integrated with PIMS. PIMS is based on an adaptive access and admission control mechanism that responds to intrusions by selectively blocking segments of data that have been affected by the intrusion. The access control mechanism provides a fine-grained control policy to allow graceful degradation of database services in order to maintain a desired level of availability while the system is undergoing through impending attacks. The unique feature of PIMS is the deployment of a data partitioning technique to confine the damage and improve the availability of the system. 

The contribution of this paper is summarized as follows. First, we propose the concept of Intrusion Boundary (IB) that defines the extent of the damage over the set of transactional workload. We formulate the IB demarcation as an optimization problem as a Mixed Integer Non-Linear Programming model (MINLP). The output of the optimization problem is a balanced IB assignment of transactions to partitions with minimum overlapping. We prove that the IB demarcation problem is NP-hard. Accordingly, we introduce two heuristics to provide a polynomial time solution. Finally, we introduce response and recovery mechanisms that use the proposed IB assignment to improve the intrusion response and recovery in terms of availability and response time. 

The rest of the paper is organized as follows. Section \ref{sec:bg} presents relevant background. Section \ref{sec:ibf} presents the definition and formulation of the IB demarcation problem, its hardness proof, and present two new heuristics. Section \ref{sec:IMS} describes the design and implementation of PIMS. The performance evaluation and the experimental results of PIMS are presented in Section \ref{sec:exp}. The related work is presented in Section \ref{sec:related}. Finally, Section \ref{sec:conc} concludes the paper.

\begin{figure}[t!]
        \centering
		\includegraphics[width=\columnwidth,scale=0.5]{./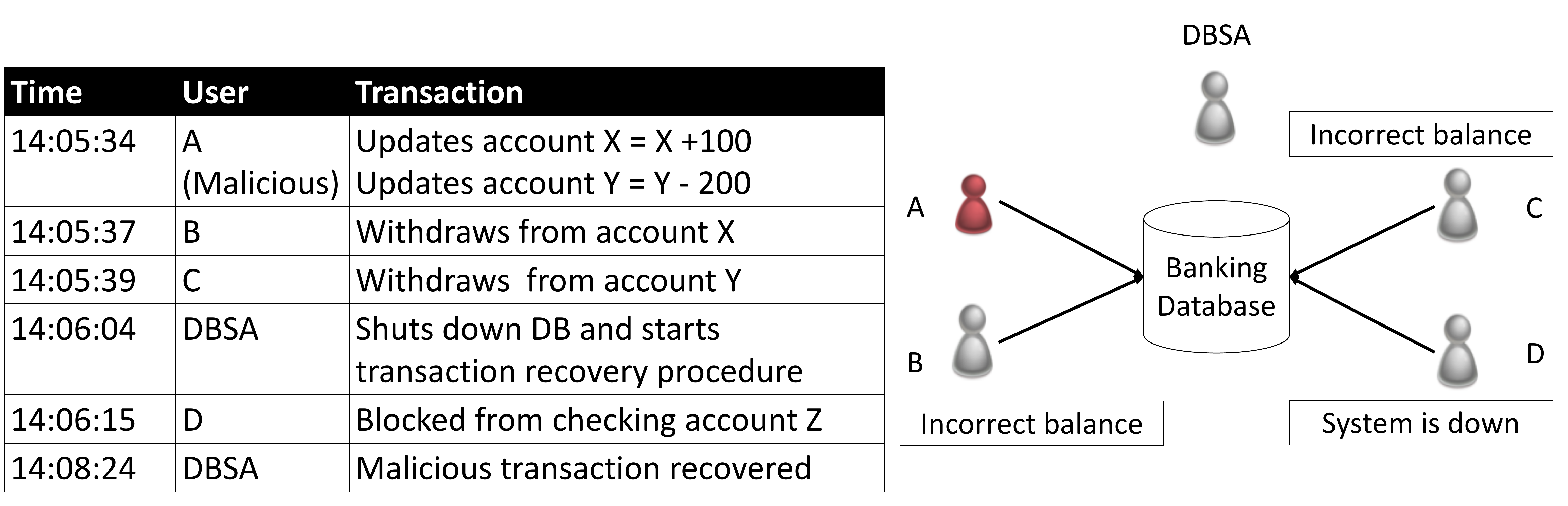}
                \caption{Motivating example.}
                  \label{fig:motsc}
                  \vspace{-3 mm}
\end{figure}

%{\color{red} Example for sequence of transactions where a malicious transaction could affect both availability and integrity\\
%Example where a direct edge between two boundary objects could make collateral damage propgation}
%

\section{Background}
\label{sec:bg}

In this section, we present the database and transaction model and explain the threat model. In addition, we present an overview of the state-of-the-art malicious transaction recovery mechanisms in DBMSs.
\begin{figure}[!t]
  \centering
  \subfigure[Transaction history $H_1$]{\label{fig:h1}   \includegraphics[width=\columnwidth]{./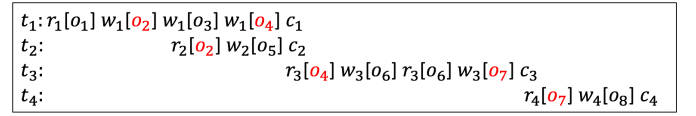}} \\
  \subfigure[Precedence graph]{\label{fig:tdg}\includegraphics[width=0.3\columnwidth]{./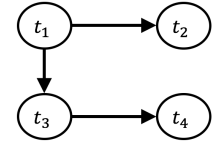}}
  \hskip 0.5truein
  \caption{Transaction history and PG.}
  \label{fig:hist}
                    \vspace{-3 mm}
\end{figure}
\subsection{Database and Transaction Model }
\label{sec:dtmodel}
A database is a set of $n$ tuples denoted by $D = \{o_1,o_2,\ldots,o_n\}$, where $o_i$ refers to a tuple. A transaction $t_i$ is a partially ordered set of operations, with a partial ordering relation $<_i$ \cite{bernstein1987concurrency}. The read and write (update) operations on a tuple, say $o \in D$, of $t_i$ are denoted by $r_{i}[o]$ and $w_{i}[o]$, respectively. The set of tuples that are read and updated by a transaction $t_i$ are denoted by $R_{i}$ and $W_{i}$, respectively. Formally, $t_i$ is defined as follows \cite{bernstein1987concurrency}:
\begin{enumerate}
\item $t_i \subseteq \{ w_i [o],r_i [o] | o \in D\} \cup \{a_i,c_i\}$
\item if $w_i[o], r_i[o] \in t_i,$ then either $w_i[o] <_i r_i[o] $ or $r_i[o] <_i w_i[o]$
\item $a_i \in t_i$ if and only if $c_i \notin t_i$, and
\item for any operation $p \in t_i$, $p <_i t$ where $t$ is either $a_i$ or $c_i$ (whichever in $t_i$)
\end{enumerate}
where $a_i$ and $c_i$ denote the \textit{abort} and \textit{commit} operations of $t_i$, respectively. In essence, Condition (1) defines the types of operations in $t_i$. Condition (2) requires that the order of execution of Read and Write operations on a data item is specified by $<_i$. Condition (3) says that $t_i$ either commits or aborts, while Condition (4) specifies that \textit{commit} (or \textit{abort}) must follow all other operations. 

%For a set of $m$ transactions $T = \{t_1,t_2,...,t_m \}$, a complete history $H$ over $T$ is a partial order with the ordering relation $<_H$, where $H = \cup_{i=1}^m t_i$ and $<_H \supseteq \cup_{i=1}^m <_i$. In other words, the execution history represented by $H$ involves all transactions in $T$ and matches all operation orderings that are specified within each transaction. Two transactions $t_i$ and $t_j$ in $H$ are \textit{dependent} if (i) $t_i$ is directly followed by  $t_j$, i.e., $t_i <_H t_j$, and (ii) $W_{t_i} \cap R_{t_j} \neq \emptyset$. The dependency between $t_i$ and $t_j$ is denoted by $t_i \rightarrow t_j$ and reads "$t_j$ depends on $t_i$". In general,  $t_j$ depends on $t_i$ if (i) $t_i <_H t_{k_1} <_H \ldots <_H t_{k_n} <_H t_j$, and (ii) $(W_{t_i} - \cup_{l=1,..,n} W_{t_{k_l}}) \cap R_{t_j} \neq \emptyset$. This indirect dependency is represented as $t_i \rightarrow^{\propto} t_j$.

For a set of $m$ transactions $T = \{t_1,t_2,...,t_m \}$, a complete history $H$ over $T$ is a partial order with the ordering relation $<_H$, where $H = \cup_{i=1}^m t_i$ and $<_H \supseteq \cup_{i=1}^m <_i$. In other words, the execution history represented by $H$ involves all transactions in $T$ and matches all operation orderings that are specified within each transaction. Two transactions $t_i$ and $t_j$ in $H$ are \textit{dependent} if (i) $t_i$ is directly followed by  $t_j$, i.e., $t_i <_H t_j$, and (ii) $W_{t_i} \cap R_{t_j} \neq \emptyset$. The dependency between $t_i$ and $t_j$ is read  "$t_j$ depends on $t_i$". In general,  $t_j$ depends on $t_i$ if (i) $t_i <_H t_{k_1} <_H \ldots <_H t_{k_n} <_H t_j$, and (ii) $(W_{t_i} - \cup_{l=1,..,n} W_{t_{k_l}}) \cap R_{t_j} \neq \emptyset$.

The dependencies among $T$ in the history $H$ is modeled using the transaction Precedence Graph (PG) \cite{bernstein1987concurrency}. PG is a directed graph $PG=\{ V,E\}$, where $V$ is a set of nodes; each node representing a committed transaction in $H$, and $E$ is a set of edges; each edge representing a dependency between two transactions. In other words, an edge between two transactions $t_i$ and $t_j$ exists if $t_j$ depends on $t_i$. Fig. \ref{fig:h1} illustrates a History $H_1$ over $T = \{ t_1,t_2,t_3,t_4\}$. $H_1$'s corresponding PG is in Fig. \ref{fig:tdg}. $t_2$ depends on $t_1$ because $t_1$ updates $o_2$ that is later read by $t_2$. Similarly, $t_3$ depends on $t_1$ and $t_4$ depends on $t_3$. 

%PG is used to check if $H$ is \textit{conflict serializable} by checking for cycles in the graph. 

\subsection{Threat Model}
In this paper, our focus is on the data corruption caused by transaction-level attacks in the DBMS. These attacks can be manifested either through masquerade access or by exploiting application vulnerabilities, e.g., SQL injection, Cross Site Scripting (XSS), and Cross-Site Request Forgery (CSRF) \cite{srinivasan2017web}. A transaction, say $t_m$, is \textbf{malicious} if it tampers the database by updating one or more tuples with incorrect data. In this context, a malicious transaction corrupts the data due to either an attack or through a user fault. A transaction, say $t_a$, is \textbf{affected} if $t_a$ directly (or indirectly) depends on a malicious or an affected transaction, i.e., $W_{t_m} \cap R_{t_a} \neq \emptyset$ (or $W_{t_j} \cap R_{t_a} \neq \emptyset$ such that $W_{t_m} \cap R_{t_j} \neq \emptyset \forall t_j \in T$). All malicious and affected transactions are invalid transactions. The execution of an invalid transaction takes the DBMS into an invalid state. For example, assume that Transaction $t_1$ in Fig. \ref{fig:tdg} is malicious. Then, both $t_2$ and $t_3$ are affected transactions. We assume that a malicious transaction does not depend on other transactions, i.e., a malicious transaction cannot be an affected transaction. Furthermore, we assume that we can undo the effects of committed transactions.

Throughout the rest of the paper, we deal with integrity attacks in which either an authorized or non-authorized user intentionally tampers with the data by injecting wrong values into some tuples. This attack can temporarily impair the availability of the DBMS during the recovery period. We rely on the existence of an IDS to detect the malicious transactions \cite{shebaro2013postgresql}. Note that the IDS alarm is received after the malicious transactions commit. We assume that the presence of an access control policy is sufficient to prevent any confidentiality attacks. Moreover, we assume that there are security countermeasures to prevent other availability attacks, e.g., Denial-of-Service (DoS) \cite{bertino2005database}.

\subsection{Recovery from Malicious Transactions}

Handling the recovery of malicious transactions requires \textit{undoing} the committed malicious and affected transactions. There are two common approaches to undo committed transactions: \textit{rollback} and \textit{compensation}. The rollback approach is to roll back all the operations performed by the committed transaction to a point that is free of damage \cite{mohan1992efficient}. On the other hand, the compensation approach seeks to selectively undo committed transactions without rolling back the state of the DBMS into a previous state \cite{korth1990formal}.

%On the other hand, the compensation approach seeks to selectively undo committed transactions without rolling back the state of the DBMS into a previous state \cite{korth1990formal}. There are two types of compensation: action-oriented and effect-oriented. The action-oriented approach compensates the action (or set of actions) performed by the committed transaction. On the other hand, the effect-oriented approach compensates the actions performed by the committed transaction and all actions performed by the subsequent transactions. PIMS is aligned with the effect-oriented compensation approach to recover the damage caused by the malicious transactions and the subsequent affected transactions. 

PIMS is relevant to the work in \cite{liu2004design,ammann2002recovery}, in which a real-time approach tracks the inter-transaction dependency, marks the affected transactions, and repairs the damage. The recovery is perfromed in two stages: damage assessment and damage repair. In the damage assessment stage, the complete and correct set of affected transactions is identified. This stage is challenging because the assessment is conducted on-the-fly while the system is processed other concurrent transactions. Due to transaction dependency, the damage might spread. Therefore, the damage assessment is terminated whenever there are no more transactions that cause spread of the damage. Once the damage assessment stage is completed, all the identified affected transaction are repaired by rolling them back without affecting the other transactions in the system. PIMS's recovery approach is different than \cite{liu2004design,ammann2002recovery} in a sense that PIMS aggressively terminates the damage spread by stopping the execution of the concurrent transactions momentarily during the damage assessment process. Moreover, PIMS adopts data partitioning scheme that provides a proactive damage confinement mechanism by designating a group of tuples to prevent the spread of the damage into the entire database. 

\section{Intrusion Boundary Demarcation using Data Partitioning}
\label{sec:ibf}
In this section, we present a data-level model to represent intra-transaction and inter-transaction dependencies. Accordingly, we define the IB demarcation problem, and formulate it as MINLP optimization problem. Finally, we prove the problem's hardness and present two efficient heuristics to solve the IB demarcation problem. 

\subsection{The IB Demarcation Problem}
The objective of the IB demarcation problem is to partition the tuples into $k$ partitions, i.e., IBs, with minimum overlap. The advantage of the IB demarcation is to confine the damage into a single IB, and thus increases the data availability in the presence of attacks. Notice that, the overlap among partitions depends on the inter-transaction dependency. For workloads with independent transactions, the demarcation results in non-overlapping IBs. On the other hand, for workloads with high inter-transaction dependencies the IB demarcation results in overlapping IBs. The IB demarcation is defined as follows. 

\begin{defn}
(Intrusion Boundary Demarcation Problem (IBDP)) Given the set of transactions $T$ over a set of $n$ tuples, IBDP is to assign the transactions into $k$ IBs such that the overlap among the IBs is minimized and the sizes of the IBs are almost equal.
\end{defn}

%\begin{defn}
%(Balanced Min-Overlapping Intrusion Boundary Demarcation Problem (BMOIBDP, for short)) Given the DDG created using a set of transactions $T$ over a set of $n$ tuples, IBDP is to partition DDG into $k$ IBs such that the overlap among the IBs is minimized and the sizes of the IBs are almost equal.
%\end{defn}

%\begin{defn}
%(General Intrusion Boundary Demarcation Problem (GIBDP, for short)) Given the DDG created using a set of transactions $T$ over a set of $n$ tuples, IBDP is to partition DDG into $k$ IBs such that the set of objective functions ($f_1, f_2, \ldots, f_q$) are minimized.
%\end{defn}

\subsubsection{Problem Formulation}
The demarcation of IBs is controlled by the dependencies among the tuples that are accessed by multiple transactions. We formulate IBDP as a dual-objective MINLP. The first objective function focuses on damage containment that minimizes the overlap among IBs. In order to define this objective function, we define the set of Boundary Tuples (BT) as follows. 
\begin{defn} (Boundary tuple) A tuple, say $o \in D$, termed a \textbf{boundary} tuples if $o$ is assigned to two or more IBs. The set of boundary tuples in the IB assignment is denoted by $BT$.  Observe that BT $\subseteq ST $. 
\end{defn}

%For example, the set of boundary tuples in Fig. \ref{fig:ddg} is $BT=\{o_4\}$.
It is sufficient to minimize the number of boundary tuples in order to minimize the overlap among IBs. Thus, the objective function can be defined as follows. 
\begin{equation}
f_1(B) = \sum_{i=1}^n b_i
\label{eq:f1}
\end{equation}
where B is the boundary tuples vector that indicates if a tuple is boundary, i.e., $b_k$=1 if the $i^{th}$ tuple is a boundary tuple, and 0, otherwise (notice that $|B|$ = n). The intuition behind minimizing the number of boundary tuples is to limit the damage propagation across IBs. However, Equation \ref{eq:f1} is oblivious to the number of IBs that share a boundary tuple. This number is termed the \textit{degree of sharing}. A boundary tuple that is shared between 2 IBs has less risk of damage propagation as compared to a boundary tuple that is shared among multiple IBs. The degree of sharing is incorporated in Equation \ref{eq:f1} in the following way.  
\begin{equation}
f_1(B,\mathit{TuIB}) = \sum_{i=1}^n b_i \left( \sum_{j=1}^k \mathit{TuIB}_{ij} - 1\right) 
\label{eq:f1_2}
\end{equation}
where $\mathit{TuIB}$ is the IB assignment matrix for the tuples, i.e., $\mathit{TuIB}_{ij}$ = 1 if the $i^{th}$ tuple is assigned to the $j^{th}$ IB, and 0, otherwise. The sum $\sum_{j=1}^k \mathit{TuIB}_{ij}$ is the number of IBs to which the $i^{th}$ tuple is assigned, whereas the sum $\sum_{i=1}^n \mathit{TuIB}_{ij}$ is the size of the $j^{th}$ IB. 

The second objective function focuses on the overall availability. The goal of this objective function is to prevent skewed IB assignment by balancing the sizes of the IBs. Formally, the objective of balancing the sizes of IBs is defined as follows.
\begin{equation}
f_2(\mathit{TuIB}) =  \sqrt{\sum_{i=1}^k \sum_{j>i}^k \left( \sum_{\ell=1}^n \mathit{TuIB}_{\ell i} - \sum_{\ell=1}^n \mathit{TuIB}_{\ell j} \right)^2}
\label{eq:f2_1}
\end{equation}
Let $\mathit{TrIB}$ be the IB assignment matrix for the transactions, i.e., $\mathit{TrIB}_{ij}=1$ if the $i^{th}$ transaction is assigned to the $j^{th}$ IB, and 0, otherwise. Similarly, Let $\mathit{TuTr}$ be a binary matrix representing the association of tuples to the transactions, i.e., $\mathit{TuTr}_{ij}=1$ if the $i^{th}$ tuples is accessed by the $j^{th}$ transaction, and 0, otherwise. Although a transaction might span multiple IBs, each transaction must be fully contained within a single IB. The intuition is that by containing a transaction within an IB, the damage is confined in that IB if the transaction is detected as malicious. Accordingly, IBDP is formulated as MINLP using objective functions $f_1$ and $f_2$ as follows. 

%\begin{align}
%& \underset{\mathit{TUA},\mathit{TRA},B}{\text{Minimize }} \nonumber
%& & f_1(\mathit{TUA},B) + f_2(\mathit{TUA})\\ \nonumber
%& \text{subject to} \\ 
%& & & \sum_{j=1}^k \mathit{TUA}_{ij} -1 \leq k b_i , \ \ \ \forall i \in \{ 1,\ldots,n\} \label{eq:c1} \\ 
%& & & 2- \sum_{j=1}^k \mathit{TUA}_{ij} \leq k (1-b_i) ,\nonumber \\ 
%& & &  \ \ \ \forall i \in \{ 1,\ldots,n\} \label{eq:c2} \\ 
%& & & \sum_{i=1}^{n} \mathit{TUTRA}_{i \ell} - \sum_{i=1}^{n} \mathit{TUTRA}_{i \ell} * \mathit{TUA}_{ij} \nonumber \\ 
%& & & \ \ \ \ \ \ \ \ \ \geq n * (1-\mathit{TRA}_{\ell j}) \label{eq:c3} \nonumber \\ 
%& & & \ \ \ \ \ \forall \ell \in \{ 1,\ldots,m\}, j \in \{ 1,\ldots,k\}  \\ 
%& & & \sum_{j=1}^k \mathit{TRA}_{ij} = 1 \ \ \ \ \forall i \in \{ 1,\ldots,m\} \ \ \  \label{eq:c4} \\ 
%& & & C \geq \sum_{i=1}^n \mathit{TUA}_{ij} \geq 1 \ \ \ \ \forall j \in \{ 1,\ldots,k\} \ \ \  \label{eq:c5} \\  
%& & & \mathit{TUA}_{ij} \in \{0,1\}, \mathit{TRA}_{\ell j} \in \{ 0,1\}, b_i \in \{0,1\} \label{eq:c6}  \nonumber \\
%& & & \forall i \in \{ 1,\ldots, n\} \ , \ell \in \{ 1, \ldots,m\},\ \  \\ \nonumber
%& & &  \ j \in \{ 1,\ldots, k\}
%\label{eq:op2}
%\end{align}
                  \vspace{-3 mm}
\begin{align}
& \underset{\mathit{TuIB},\mathit{TrIB},B}{\text{Minimize }} \nonumber
& &  f_1(B,\mathit{TuIB})  +f_2(\mathit{TuIB})\\ \nonumber
& \text{subject to} \\ 
& & & \sum_{j=1}^k \mathit{TuIB}_{ij} -1 \leq k b_i , \ \ \ \forall i \in \{ 1,\ldots,n\} \label{eq:c1} \\ 
& & & 2- \sum_{j=1}^k \mathit{TuIB}_{ij} \leq k (1-b_i) , \nonumber \\
& & & \ \ \ \forall i \in \{ 1,\ldots,n\} \label{eq:c2} \\ 
& & & \sum_{i=1}^{n} \mathit{TuTr}_{i\ell} - \sum_{i=1}^{n} \mathit{TuTr}_{i\ell} * \mathit{TuIB}_{ij} \geq \nonumber \\
& & & 1-\mathit{TrIB}_{\ell j} \label{eq:c3} \\ 
& & & \ \ \ \ \ \forall \ell \in \{ 1,\ldots,m\}, j \in \{ 1,\ldots,k\} \nonumber \\ 
& & & \sum_{j=1}^k \mathit{TrIB}_{ij} = 1 \ \ \ \ \forall i \in \{ 1,\ldots,m\} \ \ \  \label{eq:c4} \\ 
& & & \sum_{i=1}^n \mathit{TuIB}_{ij} \geq 1 \ \ \ \ \forall j \in \{ 1,\ldots,k\} \ \ \  \label{eq:c5} \\  
%& & & \sum_{j=1}^K x_{ij} \geq 1 \ \ \ \forall i \in D \\ 
& & & TuIB_{ij} \in \{0,1\}, TrIB_{ij} \in \{0,1\}, \nonumber \\ 
& & & b_i \in \{0,1\} \label{eq:c6}  \\  \nonumber 
\label{eq:op2}
                  \vspace{-3 mm}
\end{align}

The outputs of the optimization problem are the IB assignment matrix of the tuples $\mathit{TuIB}$, the IBs assignment matrix of the transactions $\mathit{TrIB}$, and the set of boundary tuples $B$. Constraints (\ref{eq:c1}) and (\ref{eq:c2}) collectively check if a tuple is assigned to multiple IBs. Accordingly, the constraints set $b_i =1$ if the $i^{th}$ tuple is boundary, and 0, otherwise. Full containment of a transaction within a single IB is checked by constraints (\ref{eq:c3}) and (\ref{eq:c4}). In particular, a transaction $t_{\ell}$ is assigned to the $j^{th}$ IB only if all the tuples access by $t\_{\ell}$ are assigned to the $j^{th}$ IB. Constraint (\ref{eq:c5}) forces the size of each IB to be at least one tuples, while constraint (\ref{eq:c6}) forces $\mathit{TuIB}$, $\mathit{TrIB}$, and $B$ to be binary matrices.

\begin{theorem}
IBDP is NP-hard.
\label{th:ibdp}
\end{theorem}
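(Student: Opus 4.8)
The plan is to prove NP-hardness by a polynomial-time many-one reduction from a known NP-hard problem to the decision version of IBDP: given an instance together with a threshold $\tau$, decide whether some feasible assignment attains $f_1 + f_2 \le \tau$. I would reduce from \textsc{3-Partition}: given $3m$ positive integers $a_1,\dots,a_{3m}$ and a bound $B$ satisfying $B/4 < a_i < B/2$ and $\sum_i a_i = mB$, decide whether the $a_i$ can be grouped into $m$ triples each summing to $B$. The deliberate choice of \textsc{3-Partition} over ordinary \textsc{Partition} is the crux of the argument and is discussed last.

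The construction maps numbers to transactions with pairwise disjoint data footprints. First I would set $k = m$ and create $3m$ transactions $t_1,\dots,t_{3m}$, giving $t_i$ its own private block of $a_i$ fresh tuples that no other transaction reads or writes; hence $\mathit{TuTr}$ has exactly one nonzero entry per tuple row and the database holds $n = \sum_i a_i = mB$ tuples. I would then ask whether IBDP admits a feasible solution with $f_1 + f_2 \le 0$, i.e. with threshold $\tau = 0$. All of this is computable in time polynomial in the \textsc{3-Partition} instance provided the $a_i$ are encoded in unary, which is exactly where the choice of source problem matters.

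For correctness I would argue both directions through the two objective terms. Because every tuple is accessed by a single transaction and full containment (constraints~(\ref{eq:c3})--(\ref{eq:c4})) places each transaction's tuples in its own IB, the transaction-to-IB assignment determines the tuple placement completely: each tuple lands in precisely the IB of its owning transaction, so no tuple is shared and $f_1 \equiv 0$. The objective therefore reduces to $f_2$, and writing $s_j$ for the number of tuples in IB $j$ we have $s_j = \sum_{t_i \in \mathrm{IB}_j} a_i$ with $\sum_j s_j = mB$. Thus $f_2 = 0$ exactly when every $s_j = B$, and the bounds $B/4 < a_i < B/2$ force any IB of total size $B$ to consist of exactly three transactions. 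Hence a zero-objective feasible solution exists iff the $a_i$ split into $m$ triples of sum $B$, i.e. iff the \textsc{3-Partition} instance is a yes-instance; constraint~(\ref{eq:c5}) is met since each balanced IB carries $B \ge 1$ tuples.

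The step I expect to be the main obstacle is keeping the reduction genuinely polynomial, and this is precisely what dictates reducing from \textsc{3-Partition} rather than \textsc{Partition}. Since an IB's size is literally a count of physical tuples, realizing an integer $a_i$ requires materializing $a_i$ separate tuples, a unary encoding of the input numbers. Reducing from \textsc{Partition} would then incur an exponential blow-up and prove nothing, because \textsc{Partition} is only weakly NP-hard; \textsc{3-Partition} is strongly NP-hard, so $n = \sum_i a_i$ stays polynomially bounded and the construction is legitimate. A tempting alternative would engage the overlap term $f_1$ directly by reducing from balanced graph or hypergraph partitioning, treating transactions as vertices and shared tuples as (hyper)edges; the difficulty there is that a boundary tuple is counted in the size of every IB it touches, so balancing tuple counts no longer coincides with balancing vertices, and the analysis becomes delicate. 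Isolating $f_2$ via disjoint footprints, as above, sidesteps that complication entirely.
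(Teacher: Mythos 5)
Your reduction is correct, and the two design decisions that make it work are exactly the right ones. There is nothing in the paper to compare it against: Theorem~\ref{th:ibdp} is stated without an accompanying proof in the text provided (the argument was evidently deferred to an appendix that is not reproduced), so your argument has to stand on its own, and it does. Reducing from \textsc{3-Partition} rather than \textsc{Partition} is precisely the point a careless proof would miss --- since IB sizes are physical tuple counts, the integers $a_i$ are necessarily materialized in unary, and only a strongly NP-hard source problem survives that encoding. Giving the transactions pairwise disjoint footprints cleanly kills the overlap term: constraints (\ref{eq:c1})--(\ref{eq:c2}) force $b_i=1$ exactly when a tuple sits in two or more IBs, so $f_1=0$ iff no tuple is shared, and the objective collapses to $f_2$, whose vanishing encodes the balanced-triple condition via $B/4<a_i<B/2$. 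The one point worth making explicit is that your backward direction leans on the \emph{intended} full-containment semantics (``each transaction must be fully contained within a single IB'') rather than on the literal constraint (\ref{eq:c3}), which as written only enforces the converse implication --- a transaction must be assigned to any IB that already contains all of its tuples --- and, together with (\ref{eq:c5}), does not by itself rule out a degenerate solution that places one tuple in each IB and leaves the rest unassigned, which would trivially achieve objective zero. That is a defect of the MINLP formulation rather than of your reduction; under the problem semantics stated in the paper's prose, both directions of your equivalence hold and the decision version, hence the optimization problem, is NP-hard.
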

\label{app:a}

\subsection{Heuristics for IB Demarcation}
We introduce two efficient greedy-based heuristics to solve IBDP. All algorithms take the set of transaction and the number of IBs as input. The output is a transaction-to-IB assignment. The algorithms start with an empty IB assignment and iteratively assign transactions to IBs based on greedy decisions that optimize the objective functions. The first heuristic is Best-Fit Assignment (BFA) that reduces the number of boundary tuples produced by the IB assignment. The intuition of the assignment is that BFA assigns the transaction to the IB that shares the largest number of shared tuples. The second heuristic is Balanced Assignment (BA) that assigns transactions such that the sizes of all IBs are almost equal. This is achieved by assigning, at each iteration, the transaction to the IB that is the smallest in size. Detailed discussion about each algorithm is presented in the following sections.

\subsection{Best-Fit Assignment (BFA)}

BFA is listed in Algorithm \ref{alg:bfa}. The algorithm starts with the empty IB assignment set $\mathcal{S}$ and empty assigned transactions set $A$ (Line 1). Then, the transactions are sorted based on the number of internal tuples in descending order (Line 2). The sorted transaction set is stored in Set $T$. Then, the algorithm initially assigns the first $k$ transactions to the empty IBs (Lines 3-9). As a result, the overlap among the IBs is minimized. This is correct because the first $k$ transactions in $T$ have the least number of shared tuples. For the remaining transactions in $T$, each transaction $t$ is assigned to the IB that shares the largest number of shared tuples with $t$ (Lines 10-19). In particular, the set of assigned transactions that overlap with $t$ is stored in $NT$ (Line 11). Then, the IBs of each transaction in $NT$ is stored in $NIB$ (Line 12). $NIB$ contains the set of all IBs that share tuples with $t$. If the set $NIB$ is not empty, $t$ is assigned to the IB that has the largest number of shared tuples with $t$ (Line 14-16). Otherwise, $t$ is assigned to the smallest IB in $\mathcal{S}$. 

\begin{lemma}
The complexity of BFA is $O(nm^2 + knm)$.
\label{lem:1}
\end{lemma}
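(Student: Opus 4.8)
The plan is to bound the running time of Algorithm \ref{alg:bfa} phase by phase, and to show that exactly two phases of the main loop generate the $nm^2$ and $knm$ terms, while every other operation is absorbed into these. I would fix at the outset a tuple-indexed representation (e.g. a length-$n$ bit vector per transaction, and a running length-$n$ occupancy vector per IB) so that a pairwise overlap test between two transactions, and a shared-tuple count between a transaction and an IB, each cost $O(n)$.

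First I would dispose of the preprocessing. Sorting the $m$ transactions by number of internal tuples (Line 2) costs $O(m\log m)$ once the per-transaction counts are known, and precomputing those counts is $O(nm)$ since each transaction touches at most $n$ tuples. Seeding the $k$ empty IBs with the first $k$ transactions (Lines 3--9) copies at most $n$ tuples per transaction, hence $O(kn)$. Both bounds are dominated by the target complexity, so they contribute nothing asymptotically.

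The core of the argument is the main loop (Lines 10--19), which runs over the $O(m)$ remaining transactions. For a fixed transaction $t$, I would charge each inner operation separately: forming $NT$, the set of already-assigned transactions overlapping $t$ (Line 11), tests $t$ against up to $m$ assigned transactions, each test being an $O(n)$ tuple-set comparison, for $O(nm)$; selecting the IB in $NIB$ sharing the most tuples with $t$ (Lines 14--16) scans at most $k$ candidate IBs, each needing an $O(n)$ shared-tuple count, for $O(kn)$; gathering $NIB$ (Line 12) and the fallback assignment to the smallest IB are cheaper and absorbed. Summing over the $O(m)$ iterations gives $O\!\left(m(nm + kn)\right) = O(nm^2 + knm)$, and adding the dominated preprocessing terms leaves the total at $O(nm^2 + knm)$, as claimed.

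The main obstacle I expect is the justification that the two set primitives --- the pairwise overlap test in Line 11 and the shared-tuple count in Lines 14--16 --- run in $O(n)$ rather than in time quadratic in the transaction sizes; this rests entirely on fixing the bit-vector representation above, so I would state that assumption explicitly. Once it is in place, the term accounting is routine, and the only remaining care is to confirm that the $nm^2$ term genuinely arises from overlap detection aggregated across the loop and that no inner step has been undercounted.
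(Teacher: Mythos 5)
Your proposal is correct and follows essentially the same decomposition as the paper's proof: $O(m\log m)$ for sorting, $O(nm)$ per iteration for computing $NT$ and $NIB$, $O(nk)$ per iteration for locating $ib_{\max}$, summed over $m$ iterations to give $O(nm^2 + knm)$. Your explicit bit-vector representation assumption for the $O(n)$ overlap tests is a detail the paper leaves implicit, but it does not change the argument.
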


\begin{proof}
\label{app:pf1}
BFA sorts the set of transactions based on the number of internal tuples in each transaction (Line 3). Sorting Set $T$ has a runtime complexity of $\mathcal{O}(m \log m)$. The loop in Lines 3-9 assigns a single transaction to each empty IB. Thus, the loop has a runtime complexity of $\mathcal{O}(k)$. Then, BFA assigns transactions to the best-fit IB (Lines 10-19). The runtime complexity of finding the set $NT$ and $NIB$ is $\mathcal{O}(nm)$ (Lines 11 and 12). Finding $ib_{max}$, the IB that has the largest number of tuples shared with $t$, and assigning $t$ in Lines 13-18 has a runtime complexity of $\mathcal{O}(nk)$. The runtime complexity for assigning all the transactions is $\mathcal{O}(m ( nm + nk))$. The overall complexity of BFA is $\mathcal{O}(m \log m + k + nm^2 + knm)$, i.e., approximately $~ \mathcal{O}(nm^2 + knm)$.
\end{proof}

\begin{algorithm}[t!]
\small
\SetNlSty{normal}{}{.}
\KwIn{$k$}
\KwOut{$\mathcal{S}=\{ib_1,...,ib_k\}$}
$\mathcal{S}=\emptyset$, $A = \emptyset$\\
$T \leftarrow$Sort transactions based on the number of internal tuples in descending order \\ 
\For{$i=1..k$}{
$t \leftarrow$ largest transaction in $T$  \\
$ib_i = \{ t \}$\\ 
$\mathcal{S} = \mathcal{S} \cup \{ ib_i\}$\\
Add $t$ to $A$\\
Remove $t$ from $T$\\
}
\For{$t \in T$}{
Find $NT$ the set of assigned transaction that overlap with $t$\\
Find $NIB$ the set  of IBs that overlap with $t$\\
\uIf{$NIB \neq \emptyset$}{
Find $ib_{max} \in NIB$ that has the largest number of tuples shared with $t$\\
$ib_{max} = ib_{max} \cup t$ \\
}
\Else{
Assign $t$ to the smallest $ib$\\
}
}
\Return{$\mathcal{S}$}
\caption{Best Fit Assignment}	
\label{alg:bfa}
\end{algorithm}

\section{The Architecture of PIMS}
\label{sec:IMS}

In this section, we introduce the architecture of PIMS. The proposed architecture for PIMS is given in Fig. \ref{fig:ims_arch}. PIMS is composed of five components: the IBDP solver, the transactions log, the Admission Controller (AC), the Response Subsystem (RES), and the Recovery Subsystem (REC). In addition, PIMS maintains a Corrupted Tuples Table (CTT) to track the status of the damage caused by the malicious transactions. The IBDP solver generates the IB assignment as discussed in the previous section. The transactions log stores information about the read/write operations of transactions. The log is essential for the recovery procedure to construct compensating transactions. The functionality of AC includes parsing the transactions, regulating the admission of transactions to DBMS, and maintaining the IB assignment table. AC determines whether to block or allow incoming transactions depending on the status of the damage and the assigned IB. When a committed transaction is identified as malicious, RES extracts the commit and detection times of the malicious transaction, approximates the set of corrupted tuples immediately, and stores it in CTT. Although the approximated set can include uncorrupted tuples, the objective is to reduce the risk of executing benign transactions that might propagate the damage unwillingly. Then, REC analyzes the inter-transaction dependency in transactions log and identifies the \textit{correct} and \textit{complete} set of affected transactions. The correct set of affected transactions means that no transactions are falsely identified as affected, while the complete set means that the set contains every affected transactions caused by the attack. We refer to this set as the set of \textbf{Affected Transactions} (AT). When AT is identified, the benign tuples are removed from CTT, and consequently can be accessed by the requesting transactions immediately. On the other hand, the corrupted tuples are blocked and will be recovered by the compensating transactions. The recovered tuples are removed from CTT gradually in order to increase the system availability. In the following sections, we present detailed information about the functionality of each component. 

\label{sec:ims}

\begin{figure}[t!]
        \centering
		\includegraphics[width=\columnwidth,height=5cm]{./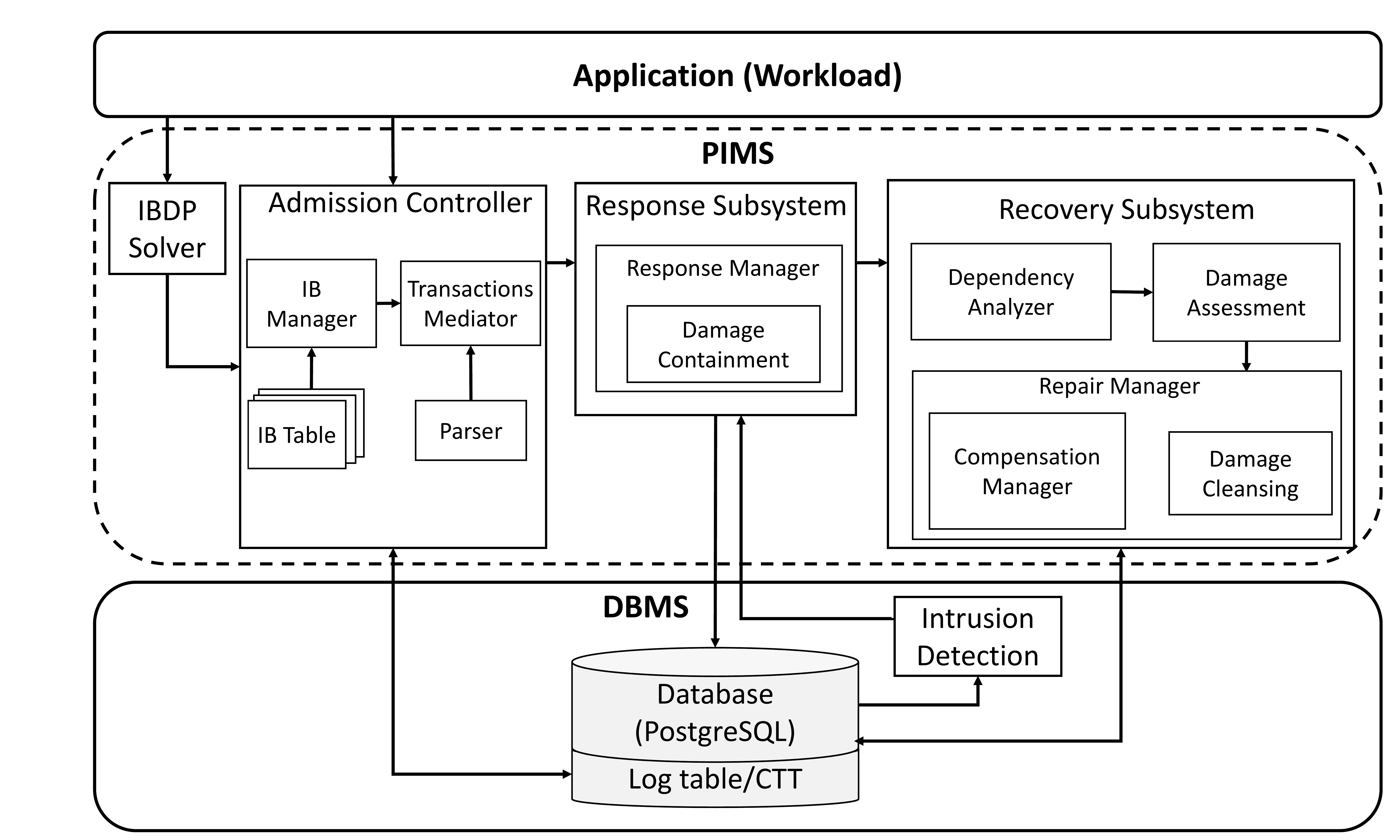}
                \caption{PIMS Architecture.}
                  \label{fig:ims_arch}
                                    \vspace{-3 mm}
\end{figure}

\subsection{Transactions Log}

In order to obtain accurate information about the extent of the damage caused by the malicious transaction, the read/write operations of all transactions need to be logged. However, conventional undo/redo logs and DBMS triggers only record write operations. To address this issue, we have implemented a read/write log that records the transaction ID, the tuple ID, the before-image (the previous value), the after-image (the current value), and the time-stamp for each read and update operation. The transactions log is implemented in the DBMS kernel to reduce the overhead of logging. The transactions log is maintained as a table in the DBMS for efficient retrieval by PIMS. The transactions log is used by AC to check if a transaction reads from corrupted tuples, and by RES to identify AT. 

\subsection{The Admission Controller (AC)}

AC has three subcomponents: the Parser, the IB Manager (IBM), and the Transaction Mediator (TM). The parser extracts the read/write set from transactions. IBM maintains in the IB table information about the IBs including the set of boundary tuples and the transaction-to-IB assignment. The functionality of IBM is to manage the access to the boundary tuples. In essence, the updated boundary tuples are locked by IBM until IDS reports the status of the updating transaction. If the transaction is identified as malicious, then the locked boundary tuples are added to CTT. Otherwise, the locked boundary tuples are released. One problem is that IDS only triggers an alarm when a malicious transaction is detected. We assume that IDS has a detection delay of $\Delta$ $ms$. The boundary tuples are locked for a sufficient time after which the updating transaction can never be detected as malicious. We set PIMS to wait for 1.5 $\times \Delta$ ms before releasing the boundary tuples. The objective of locking the boundary tuples after update is to assure that the damage does not propagate to other IBs. We refer to the process of locking the boundary tuples as \textit{delayed access} mechanism. 

TM checks if the transaction's read/write set contains any tuple that exists in CTT or BT. If a tuple in the read set exists in CTT, the transaction is suspended. On the other hand, if a tuple in the write set exists in CTT, then the transaction is executed, and the tuples are removed from CTT and are excluded from subsequent undo and redo operations. AC is signaled on two events: 1) the corrupted tuples are recovered, or 2) the recovery procedure is completed. When signaled, TM executes the suspended transaction if its read tuples are all recovered. Otherwise, the transaction remains suspended until a new signal is received from RES or REC. If the transaction's read/write set contains a tuple that exists in BT, then the transaction is suspended until the tuples are released by IBM. 

Before executing the transaction, AC acquires the locks associated with the assigned IB(s). The lock is to ensure that no concurrent recovery procedure is running on the IB(s). The lock is released by AC when the transaction is executed successfully. The overall procedure for admission control is listed in Algorithm \ref{alg:adm}. Notice that multiple instances of the admission controller can be executed using multiple threads to increase transaction concurrency. We rely on the available Concurrency APIs to queue the incoming transactions when TM threads are busy waiting for signals. 

\begin{algorithm}[t!]
\small
\SetNlSty{normal}{}{.}
\KwIn{$t_i$}
\KwOut{}
$RW_{t_i} \leftarrow$ the read/write tuples set $t_i$\\
\While{$RW_{t_i} \cap BT \neq \emptyset$}{
Wait until the requested $BT$ is released\\
}
\While{$ RW_{t_i} \cap CTT \neq \emptyset$ }{
Wait until request tuples in CTT are recovered and released\\
}
\tcc{Wait until lock is acquired}
Acquire lock on $IB_{t_i}$ \\
Execute $t_i$\\
Release lock on $IB_{t_i}$\\
\caption{Admission control}	
\label{alg:adm}
\end{algorithm}

\subsection{The Response Subsystem (RES)}

RES is activated when a transaction $t_m$ is detected as malicious by IDS. The objective of RES is to prevent subsequent benign transactions from reading corrupted tuples that are updated by $t_m$, and thus control the spread of the damage. RES collects the time information about $t_m$ from IDS and the transactions log, i.e., the commit timestamp $t_m^c$ and the detection timestamp $t_m^d$. Moreover, RES extracts $IB_{t_{m}}$, the set of IBs that are spanned by $t_m$, from the IB table. Consequently, RES adds all tuples in $IB_{t_m}$ that have been updated between $t_m^c$ and $t_m^d$ to CTT. Notice that the tuples that are updated between $t_m^c$ and $t_m^d$ but not assigned to $IB_{t_m}$ are not added to CTT. Thus, PIMS provides more accurate damage confinement as compared to ITDB, which blocks all tuples on temporal basis only. Nevertheless, CTT might contain uncorrupted tuples updated by benign transactions that are independent from $t_m$. Once enough information about the inter-transaction dependencies between $t_m$ and subsequent transactions is gathered, PIMS removes the uncorrupted tuples from CTT as explained in the next section. The procedure of RES is summarized in Algorithm \ref{alg:res}.
\begin{algorithm}[t!]
\small
\SetNlSty{normal}{}{.}
\KwIn{$t_m, t_m^d$}
\KwOut{Updated $CTT$}
$t_m^c$ $\leftarrow$ get commit time of $t_m$\\
Find $IB_{t_m}$ the set of IBs spanned by $t_m$\\
\For{tuples updated between $t_m^c$ and $t_m^d$}{
\If{tuple is assigned to an $IB \in IB_{t_m}$}{
Add tuple to $CTT$
}
}
\caption{Intrusion Response}	
\label{alg:res}
\end{algorithm}

%After the set of suspicious tuples is identified, IR makes this set available to the mediator at each affected IB. The mediator blocks any incoming transaction that depends on the suspicious set until the suspicious set is recovered. One challenge is to identify the correct suspicious set as fast as possible otherwise incoming transactions can spread the damage while IR is working on identifying the correct set. However, approximating the set in a short time might results in either overestimating or underestimating the suspicious set. In case of overestimating, more tuples are included and hence more transactions are blocked. On other hand, underestimating results in more damage propagation. This phenomena has been studied in \cite{liu2004design} and \cite {ammann2002recovery}. We adopt a similar strategy, in which the suspicious set is approximated by including all updated tuples during $qw$ that are within the affected set of IBs without checking if they actually depends on the malicious transaction. Clearly, this is overestimating of the suspicious set. The advantage is that the identification process is done in short time, thus limiting the damage propagation. Good tuples are then released in later stages once the set actually affected tuples are carefully identified. Details about releasing good tuples will be provided in the next section. 

\subsection{The Recovery Subsystem (REC)}
\label{sec:RES}
%Per IB mechanism 
REC is the core component of PIMS that identifies AT and executes compensating transactions for the corrupted tuples. The compensating transactions perform two operations: \textit{undo} and \textit{redo}. The undo operation unwinds the effect of the malicious transaction and each transaction in AT. By performing the undo operation, the state of the DBMS returns to the state just before the malicious transaction is executed. However, the update operations performed by the subsequent benign transactions are lost. The redo operation preserves the lost updates by re-executing each transaction in AT. Notice that the malicious transaction is not re-executed because its updates are undesirable. 
\begin{algorithm}[t!]
\small
\SetNlSty{normal}{}{.}
\KwIn{$t_m, IB_{t_m}$}
\KwOut{Updated CTT}
Acquire recovery lock in every $IB \in IB_{t_m}$\\ 
Block new transactions in $IB_{t_m}$\\
Wait for currently running transaction in $IB_{t_m}$ to commit\\
%Acquire lock for each $IB_{t_m}$\\
Find the set of affected transaction $AT_{t_m}$ \\ 
Find the $S$ set of all updated tuples by $t_m$ and $t_a \in AT$\\
%Release lock each $IB_{t_m}$\\
Resume new transaction in $IB_{t_m}$\\
\For{$o \in CTT$}{
\If{$o \notin S$}{
Remove $o$ from CTT and flag as valid
}
}
Resume transactions to $IB_{t_m}$\\
\tcc{ Phase \RNum{1}}
Undo $t_m$\\
\For{$T_a \in AT_{t_m}$}{
Undo $T_a$ \\
}
\tcc{ Phase \RNum{2} }
\For{$T_a \in AT_{t_m}$}{
Redo $T_a$ \\
}
\caption{Intrusion Recovery}	
\label{alg:rec}
\end{algorithm}

The overall algorithm for REC is listed in Algorithm \ref{alg:rec}. REC is activated once the response procedure is finished. Thus, the response and recovery transactions are executed serially. Serializing the response and recovery procedures is essential since the recovery mechanism uses CTT that is updated by the response subsystem. REC temporarily blocks new transactions to prevent new transactions from reading the corrupted tuples while REC identifies AT. This is achieved by acquiring locks on $IB_{t_{m}}$. Incoming transactions are blocked by AC. Notice that the active transactions are not preempted. When all active transactions are committed, the \textit{Dependency Analyzer} scans the transactions log starting from $t_m^c$ through the current timestamp in order to find AT. The write set of each transaction in AT is extracted and is added to CTT. The \textit{Damage Assessment} subsystem removes the uncorrupted tuples that have been initially added by RES from CTT.  

At this point, CTT contains the correct and complete set of the corrupted tuples caused by $t_m$. The \textit{Compensation Manager} (CM) executes a sequence of compensating transactions that gradually repair the damage. Damage repair is performed in two phases. In Phase 1, CM executes compensating transactions that unwind the effect of the malicious and affected transactions. CM uses the transactions log table to find the correct version of the corrupted tuples. In particular, CM updates the corrupted tuples with the values of the most recent versions before the execution of the malicious transaction. The compensating transactions are executed in the order at which the malicious and affected transactions are committed. At the end of this phase, \textit{Damage Cleansing} (DC) removes the recovered tuples from CTT and signals AC to resume any blocked transactions. 

In the second phase, CM executes compensating transactions that re-execute each transaction in AT in the same order at which they are committed. The information required to re-execute the transactions, e.g., old balance, is maintained in the transactions log. Once an affected transaction is re-executed successfully, DC removes the set of recovered tuples from CTT and signals AC.  

\subsection{Managing Multiple Malicious Transactions}

The advantage of partitioning the tuples into IBs is to execute concurrent response and recovery procedures on different IBs. The reason is that all transactions in AT are guaranteed to be contained in $IB_{t_m}$. As a result, concurrent recovery procedures do not perform conflicting operations while recovering the corrupted tuples. Multiple instances of response and recovery procedures are executed using multiple threads.
In the case when multiple malicious transactions are detected within the same IB, the recovery procedures need to be coordinated to avoid out-of-order execution of the recovery operations. The strategy is to execute the recovery procedures for the malicious transactions in the same order in which they are detected.  

\section{Performance Evaluation}
\label{sec:exp}
In this section, performance evaluation of the proposed PIMS framework is presented. First, we discuss the synthetic workload used for the performance evaluation. Then, we evaluate the performance of the proposed IB demarcation heuristics. Finally, we conduct extensive system evaluation of PIMS using synthesized workloads, and present the performance results. 

\subsection{Synthetic Transactional Workload}
Several benchmarks have been developed to evaluate the performance of OLTP systems, e.g., TPC-C \cite{tpcc}, \textit{SmallBank} \cite{alomari2008cost}, and YCSB \cite{cooper2010benchmarking}. However, no benchmark has been developed to evaluate the performance of  intrusion management systems in OLTP. In order to evaluate the performance of PIMS, we have developed a malicious transaction workload benchmark that generates long chains of dependent transactions. These long chains of dependent transactions amplify the potential damage that can be caused by the malicious transactions. In essence, executing dependent transactions shortly after malicious transactions induce the spread of damage. Thus, the proposed benchmark allows the assessment of the capability of the intrusion management systems to confine and recover the damage without degrading the performance of the DBMS. 

The proposed benchmark simulates a banking money-transfer application. In essence, the benchmark consists of a single data table, \textit{Checking}, that has two attributes: \textit{id} and \textit{balance}. The benchmark has three types of money transfer transactions: \textit{distribute}, \textit{collect}, and \textit{many-to-many} transfer. A distribute transaction transfers money from a single account to $N$ other accounts; A collect transaction transfers money from $M$ accounts to a single account; A many-to-many transactions transfers money from $N$ accounts to another $M$ accounts. The benchmark is implemented on an OLTP-benchmark testbed for relational databases \cite{difallah2013oltp}.

We now explain the process of generating a workload of $m$ transactions. We use four parameters to characterize the workload: 1) the inter-transaction dependency probability threshold $\beta$, 2) the maximum number of dependent transactions $Tx_{max}$, 3) the amount of transferred money $\gamma$, and 4) the maximum transaction size $Size_{max}$. First, the inter-transaction dependency is modeled using PG. PG is constructed using the Erd\"{o}s-Renyi model \cite{erdos1960evolution}, in which an edge has probability $p$ of existence. The degree of inter-transaction dependency is controlled by grouping transactions into small groups. Within each group, a pair of transactions are dependent if $p$ is greater than $\beta$. Semantically, each group models money transfer transactions within a city, state, or country. In order to avoid fully connected graphs within the group of transactions, the number of allowed dependent transactions is limited to $Tx_{max}$. In other words, a transaction can be dependent to no more than $Tx_{max}$ transactions. Once a PG is created, tuples are assigned to the transactions as follows. The size of each transactions is determined using a uniform distribution with a range of [2,$Size_{\text{max}}$]. The transaction type is randomly selected from distribute, collect, or many-to-many. Each transaction shares a single tuple with the set of its dependent transactions. The value of $\gamma$ is chosen from a uniform distribution with range of [0.01,0.1]. Accordingly, the query is generated. Each transaction contains a single query. We generate 4 workloads with different sizes, mainly, 5000, 10000, 15000, and 20000 transactions. For each workload, we vary the value of $\beta$ to be 0.25, 0.5, and 0.75.

\subsection{Evaluation of the IB Demarcation Heuristics}
\label{sec:heurPer}

In this section, we evaluate the performance of the proposed heuristics using a synthetic transactional workload. We compare the performance of the proposed heuristics with two assignment techniques: Random Assignment (RA) and Skewed Assignment (SA). In RA, transactions are assigned to the IBs randomly. In SA, transaction are randomly assigned to the IBs following an 80-20 rule, i.e., 80\% of the transactions are assigned to 20\% of the IBs. SA is used to emphasize the importance of the size balancing objective (Equation \ref{eq:f2_1}). We use two metrics to evaluate the performance of the heuristics: the total number of boundary tuples (Equation \ref{eq:f1_2}), and the assignment fairness index. The fairness index is used to assess the performance in terms of size balancing among IBs. The fairness index is an adaptation of the Jain's fairness index \cite{jain1999throughput} in the following way. 
\begin{equation}
\mathcal{J}(IB_1,IB_2,\ldots,IB_k) = \frac{(\sum_{i=1}^k |IB_i|)^2}{k \cdot \sum_{i=1}^k |IB_i|^2}
\end{equation}
where $|IB_i|$ is the number of transactions assigned to the i$th$ IB.

In the first experiment, we vary the number of IBs, k, using a workload of 5000 transactions, and $\beta=0.75$. The number of boundary tuples increases as $k$ increases (refer to Fig. \ref{fig:ddg_ibs_bo}). The reason is that as $k$ increases, the number of contained transactions in each IB is reduced. As a result, the overlap among IBs increases and thus increases the total number of boundary tuples. We observe that BFA produces a less number of boundary tuples compared to BA and RA. The reason is that BFA performs a greedy decision to reduce the number of boundary tuples when assigning a transaction to IBs. Moreover, BA performs better than RA because BA attempts to balance the sizes of the IBs. Notice that SA produces less boundary tuples than BFA for $k$ $<$10, but more boundary tuples when k$>$10. The reason is that, when the number of IBs is small, e.g., for k=5 and k=10, 80\% of the transactions are assigned only to 1 IB and 2 IBs, respectively. Thus, most of the inter-transaction dependencies are contained in at most 2 IBs. BFA and BA produce fair assignment as in Fig. \ref{fig:ddg_ibs_fi}. SA performs the worst by design, while RA performs better when the number of transactions is large and the number of IBs is small because the random assignment tends to be fair. 

In the next experiment, we vary the number of transactions and the value of $\beta$ while generating the IB assignment with $k= 10$. We observe that increasing the number of transactions increases the number of boundary tuples as in Fig. \ref{fig:ddg_tx_bo}. The reason is that increasing the number of transactions increases the number of shared tuples among transactions. Notice that the set of boundary tuples is a subset of the shared tuples. Similarly, increasing the value of $\beta$ increases the number of boundary tuples as in Fig. \ref{fig:ddg_alpha_bo}. The reason is that as the value of $\beta$ increases, the number of dependent transactions increases. As a result, the number of shared tuples increases. 

\begin{figure}[t!]
  \centering
  \subfigure[Number of boundary tuples]{\label{fig:ddg_ibs_bo}  \includegraphics[width=0.5\columnwidth]{./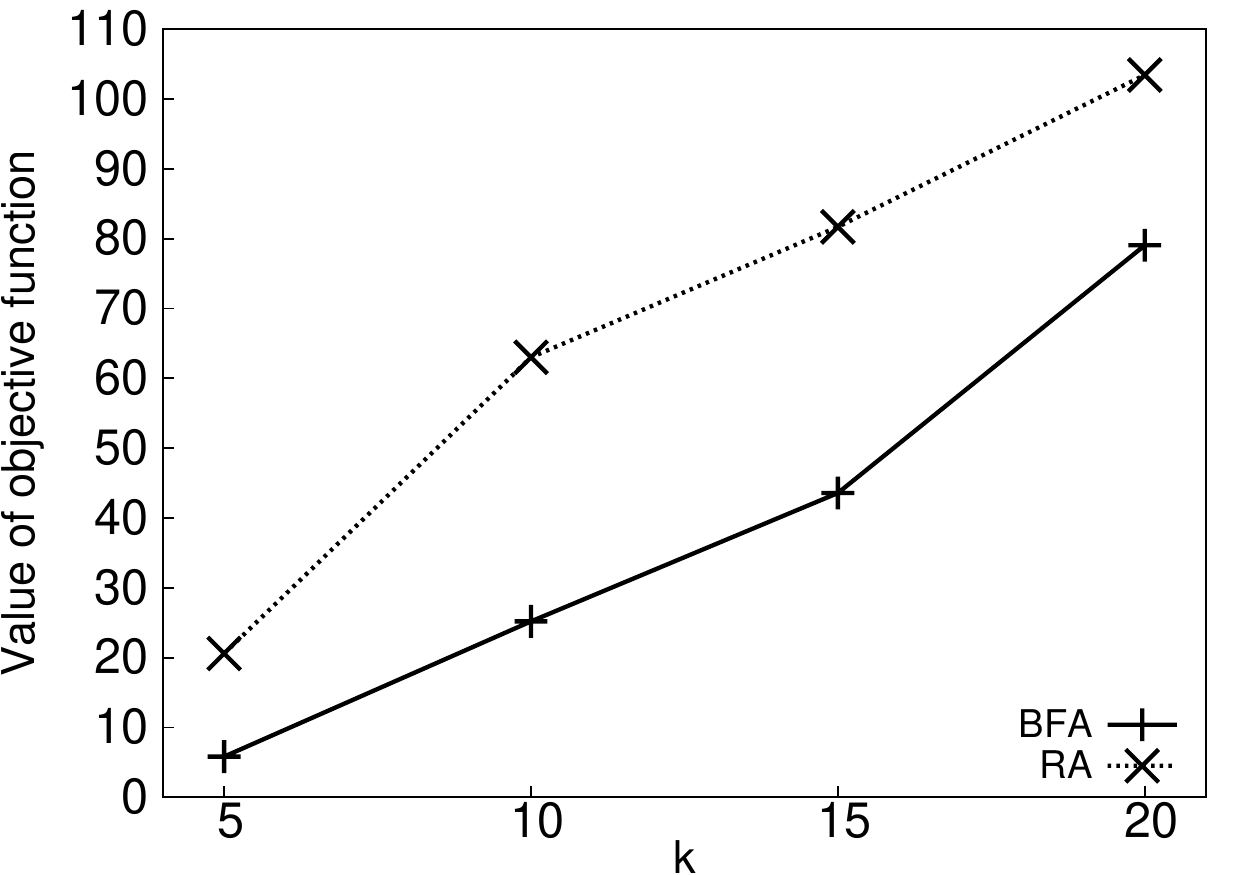}}%
    \subfigure[Fairness index]{\label{fig:ddg_ibs_fi}\includegraphics[width=0.5\columnwidth]{./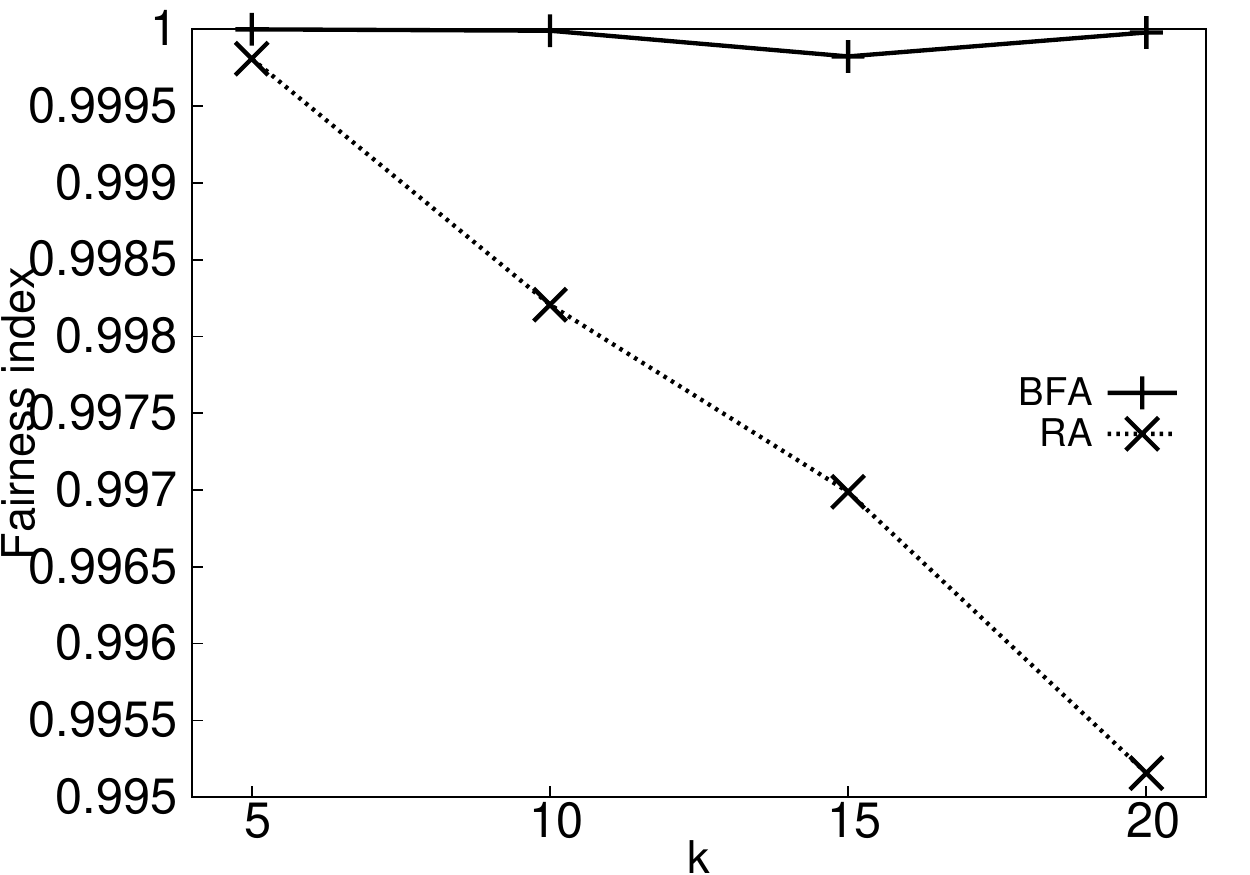}}
  \caption{Effect of $k$ on a workload of 5000 transaction with $\alpha=0.75$.}
  \label{fig:ddg_ibs}
                      \vspace{-3 mm}
\end{figure}

\begin{figure}[t!]
  \centering
  \subfigure[Effect of the number transactions]{\label{fig:ddg_tx_bo}  \includegraphics[width=0.5\columnwidth]{./figures/gnuplot/heuristics/size/bo}}%
%  \subfigure[Fairness index]{\label{fig:ddg_tx_fi}\includegraphics[width=0.5\columnwidth]{././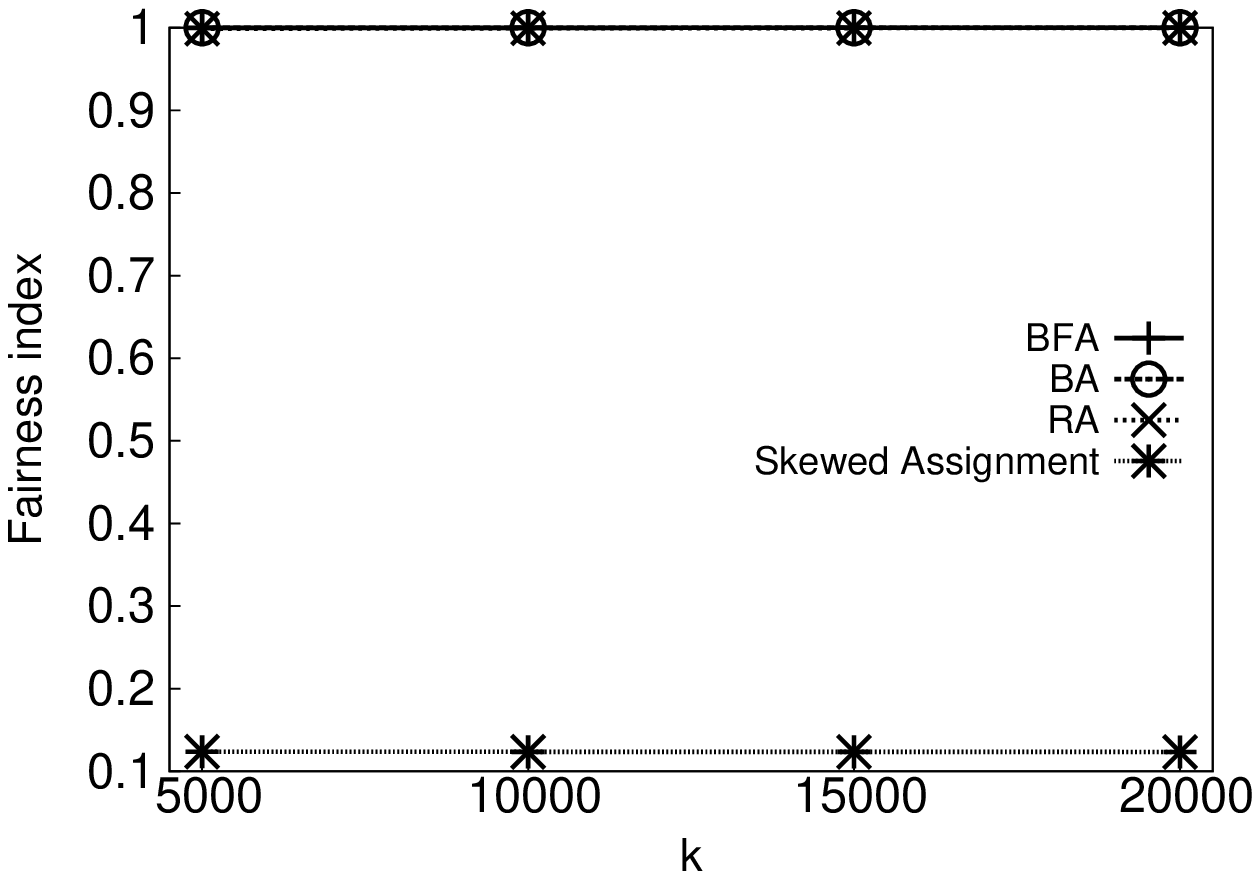}}
  \subfigure[Effect of $\beta$]{\label{fig:ddg_alpha_bo}  \includegraphics[width=0.5\columnwidth]{./figures/gnuplot/heuristics/connectivity/bo}}%
  \caption{Effect of the number of transactions and value of $\beta$ on a workload 5000 transaction with $\alpha=0.75$.}
  \label{fig:ddg_tx}
                      \vspace{-3 mm}
\end{figure}

%\begin{figure}[t!]
%  \centering
%  \subfigure[Number of boundary tuples]{\label{fig:ddg_alpha_bo}  \includegraphics[width=0.5\columnwidth]{././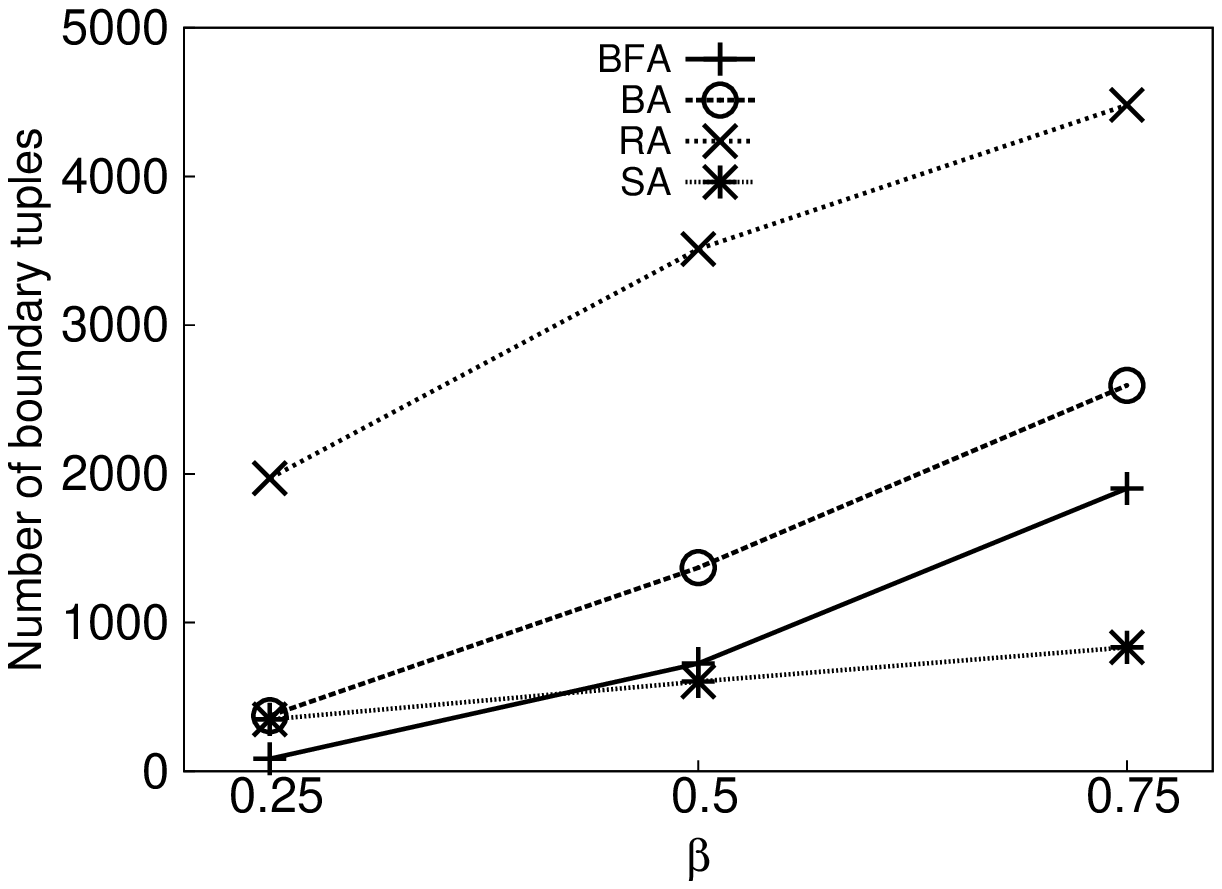}}%
%  \subfigure[Fairness index]{\label{fig:ddg_alpha_fi}\includegraphics[width=0.5\columnwidth]{./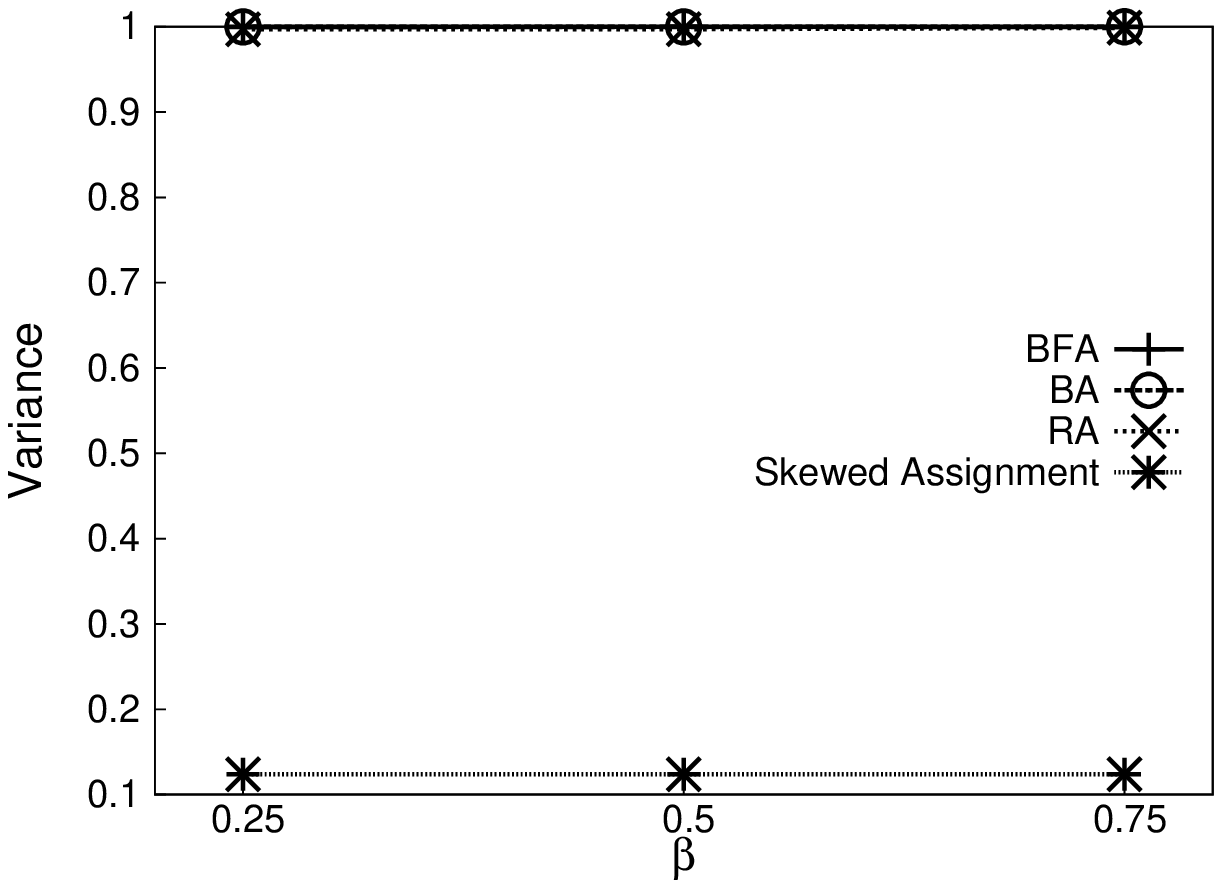}}
%  \caption{Varying the values of $\alpha$ for 5000 transactions}
%  \label{fig:ddg_alpha}
%\end{figure}

%\begin{figure}[t!]
%  \centering
%  \subfigure[Number of boundary tuples]{\label{fig:at_o_ddg2}  \includegraphics[width=0.5\columnwidth]{./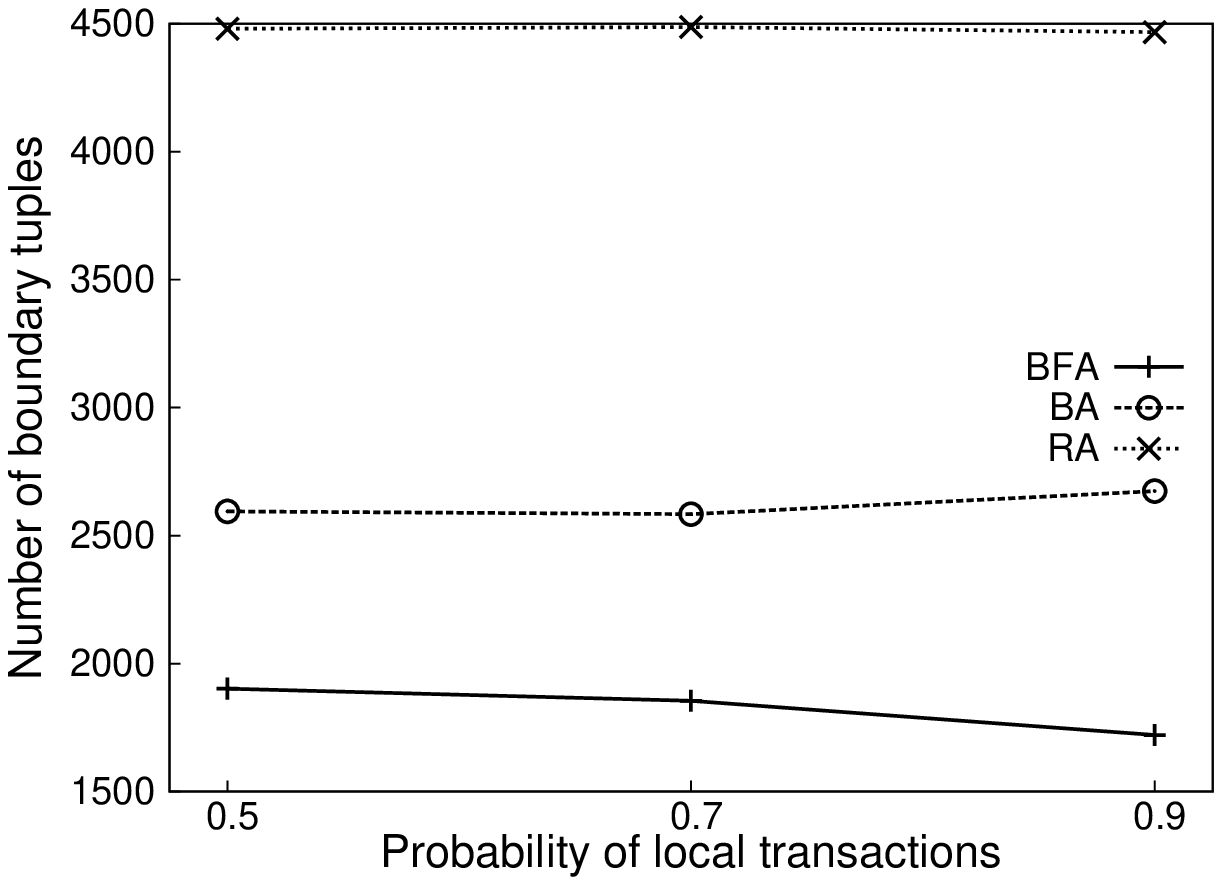}}%
%  \subfigure[Fairness index]{\label{fig:at_h_ddg2}\includegraphics[width=0.5\columnwidth]{./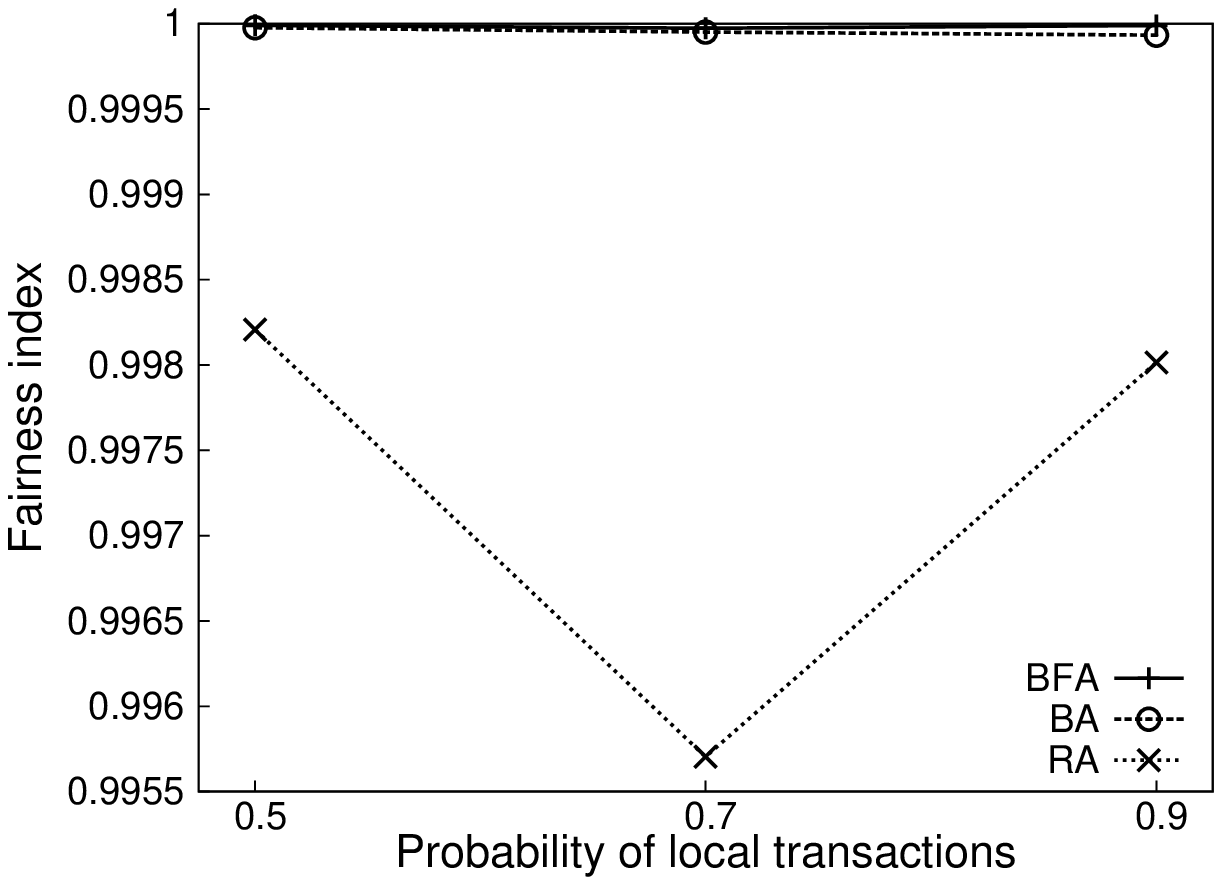}}
%  \caption{}
%  \label{fig:at_ddg2}
%\end{figure}

\subsection{PIMS Performance Evaluation}

The experiments are conducted on a Dell Power Edge R420 server with 6-core Intel E5-2620v3 CPU, 120 GB of RAM, and Ubuntu Server 16.0 LTS OS. We use PostgreSQL 9.5. PIMS is implemented using Java and is connected to PostgreSQL using JDBC. We use 8 workers (threads) for the admission controller and an unlimited pool of threads for the response and recovery procedures. Before each experiment, the Checking table is populated with 100,000 tuples of unique IDs and initial balance of \$10,000. We use a B-tree index on the Checking table to improve the query execution time. We assume that the transactions are submitted according to a Poisson distribution with an arrival rate of $\lambda$. Furthermore, malicious transactions are injected into the workload using a uniform distribution over time. The number of malicious transactions is based on the attack intensity $\pi$ that is a percentage of the total number of transactions.

We evaluate the performance of PIMS using four metrics: 1) the number of affected transactions, 2) the number of blocked transactions, 3) the average recovery time, and 4) the average response time. The number of blocked transactions indicates the performance of PIMS in terms of availability, while the number of affected transactions indicates the cost of damage. The average response time is computed based on the response times of all committed transactions. We compare the performance of PIMS using the proposed heuristics, i.e., BFA and BA, with RA and SA. We refer to PIMS with k$>$1 as "PIMS\_k". We often use PIMS\_k\_BFA, PIMS\_k\_BA, PIMS\_k\_RA, and PIMS\_k\_SA to differentiate between the assignment techniques as needed. The performance of PIMS is compared against PIMS\_1 (denoted by \textit{One IB}) and ITDB \cite{bai2009data}. We use a workload of 5000 transactions to conduct the experiments, unless stated otherwise. In each experiment, we run PIMS with $k$ values of 5, 10, 15, and 20. We do not notice any improvement in the performance beyond 20 IBs. 

\subsubsection{Read Log Overhead}

In this experiment, we study the overhead of logging the transaction read/write operations. Figs. \ref{fig:readoverhead_th} and \ref{fig:readoverhead_res} give the throughput and response time with logging (PIMS) and without logging (no logging) for various values of $\lambda$. Notice that the throughput in both cases match the transaction arrival rate. However, logging read/write operations adds 30\% overhead to the response time. This overhead is inevitable because PIMS relies on the transactions log to identify AT, and generate the compensating transactions.

\begin{figure}[t!]
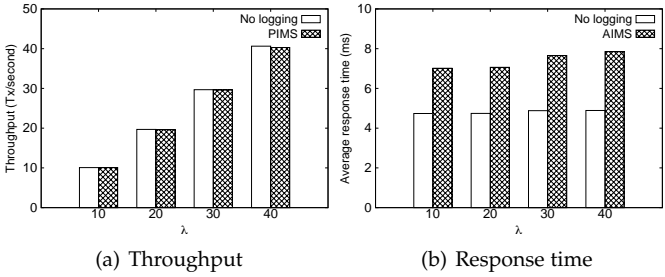

  \centering
  \subfigure[Throughput]{\label{fig:readoverhead_th}  \includegraphics[width=0.5\columnwidth]{./figures/gnuplot/readLogOverhead/throughput}}%
  \subfigure[Response time]{\label{fig:readoverhead_res}\includegraphics[width=0.5\columnwidth]{./figures/gnuplot/readLogOverhead/response}}
  \caption{Overhead of logging read operations.}
  \label{fig:readoverhead}
                      \vspace{-3 mm}
\end{figure}

\subsubsection{Effect of Attack Intensity}

In this experiment, we compare the performance of PIMS and OneIB with ITDB as the attack intensity $\pi$ increases. We use $k=10$ and $\Delta = 100 \ ms$ to plot the results using BFA and BA assignments. Refer to Fig. \ref{fig:numMal}. We observe that the extent of the damage increases as the attack intensity increases. In particular, the number of affected transactions increases as the number of malicious transactions increases. This is intuitive because increasing the number of malicious transactions increases the number of affected transactions with the same inter-transaction dependency. However, the average time to recover the damage remains constant as in Fig. \ref{fig:numMal_rec}. Observe that the number of affected transactions and the average recovery time for PIMS\_10\_BFA and PIMS\_10\_BA are less than OneIB and ITDB. The reason is that PIMS employs the delayed access mechanism that takes a proactive approach to block the incoming transactions that can potentially spread the damage. However, the disadvantage of the delayed access mechanism is the increase in the number of blocked transactions and average response time as in Fig. \ref{fig:numMal_blk} and \ref{fig:numMal_res}. PIMS\_BFA has less response time as compared to PIMS\_BA because BFA generates lower number of boundary tuples. Notice that the delayed access mechanism is not active in the case of OneIB and ITDB because there are no boundary tuples, and thus the response time is lower than PIMS. 

Notice that, in general, increasing the number of affected transactions increases the number of compensating transactions to be performed. Consequently, the average recovery time increases. Similarly, increasing the number of blocked transactions increases the average response time. Even though OneIB and ITDB almost have the same number of affected transactions with the same attack intensity, OneIB encounters less recovery time as compared to ITDB. The reason is that ITDB needs multiple passes on the transactions logs to find AT \cite{ammann2002recovery}, while OneIB needs a single pass because it temporarily blocks transactions. However, the overhead of blocking incoming transactions is an increase in the number of blocked transactions and response time in OneIB as compared to ITDB. 

We note that the proposed response and recovery methodology, i.e., PIMS, reduces the number of affected transactions by at least 33\% as compared to ITDB. Consequently, the average recovery time is reduced by 50\%, i.e., 150 $ms$ in PIMS\_10\_BFA as compared to 300 $ms$ in ITDB with 750 malicious transactions. The downside is that PIMS incurs a larger number of blocked transactions due to blocking the transactions that request to read boundary tuples. The increase in the response time is around 60\% as compared to ITDB. Nevertheless, the average response time for PIMS does not exceed 50 $ms$ when the attack intensity is 15\%. Fig. \ref{fig:numMal_th} gives the throughput of PIMS and OneIB as the attack intensity increases for $\Delta=100$ $ms$ and $\lambda=10$. We note that PIMS and OneIB match the transaction arrival rate $\lambda$, and thus does not incur any overhead on the throughput. 

\begin{figure}[t!]
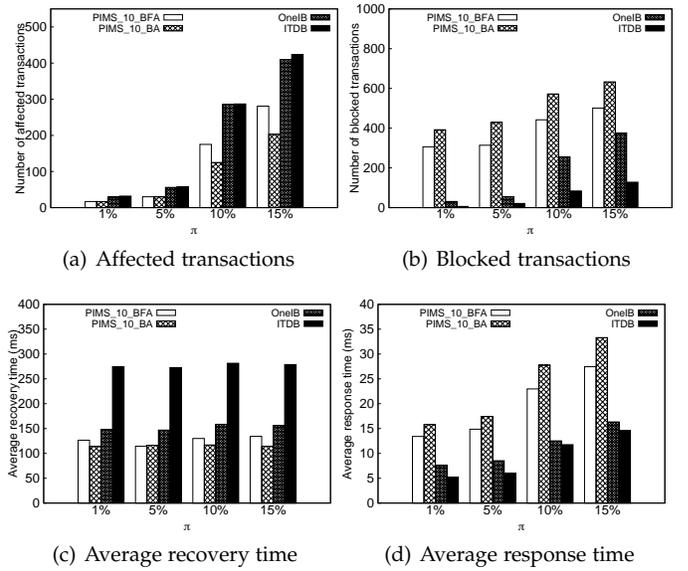

  \centering
  \subfigure[Affected transactions]{\label{fig:numMal_aff}  \includegraphics[width=0.5\columnwidth]{./figures/gnuplot/noHotspot_100k/numOfMal/affTx}}%
  \subfigure[Blocked transactions]{\label{fig:numMal_blk}\includegraphics[width=0.5\columnwidth]{./figures/gnuplot/noHotspot_100k/numOfMal/blocked}}
\subfigure[Average recovery time]{\label{fig:numMal_rec}\includegraphics[width=0.5\columnwidth]{./figures/gnuplot/noHotspot_100k/numOfMal/recovery}}%
  \subfigure[Average response time]{\label{fig:numMal_res}\includegraphics[width=0.5\columnwidth]{./figures/gnuplot/noHotspot_100k/numOfMal/response}}%
  \caption{Effect of $\pi$ with $\Delta = 100$ $ms$ and $\lambda =10$.}
  \label{fig:numMal}
                      \vspace{-3 mm}
\end{figure}

\begin{figure}[t!]
        \centering
		\includegraphics[width=0.5\columnwidth]{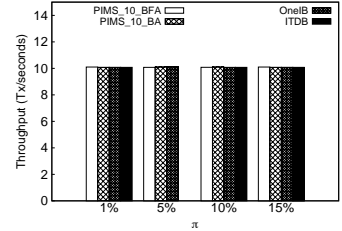}
                \caption{Throughput of PIMS and OneIB for different values of ($\pi$) with $\Delta = 100$ $ms$ and $\lambda =10$.}
                  \label{fig:numMal_th}
                                      \vspace{-3 mm}
\end{figure}

\subsubsection{The Effect of The Number of IBs}
\label{exp:ibs}

\begin{figure}[th!]
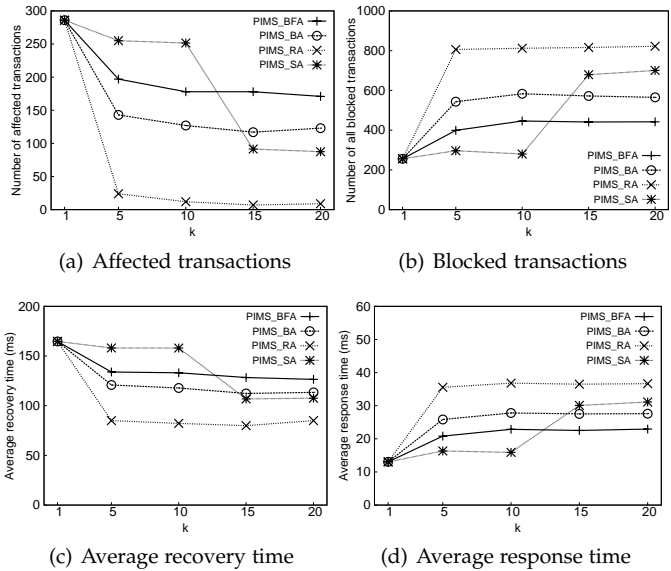

  \centering
  \subfigure[Affected transactions]{\label{fig:ibs_d100_l10_aff}  \includegraphics[width=0.5\columnwidth]{./figures/gnuplot/noHotspot_100k/delta100_lambda10/affTx}}%
  \subfigure[Blocked transactions]{\label{fig:ibs_d100_l10_blk}\includegraphics[width=0.5\columnwidth]{./figures/gnuplot/noHotspot_100k/delta100_lambda10/allBlocked}}
      \subfigure[Average recovery time]{\label{fig:ibs_d100_l10_rec}\includegraphics[width=0.5\columnwidth]{./figures/gnuplot/noHotspot_100k/delta100_lambda10/recovery}}%
  \subfigure[Average response time]{\label{fig:ibs_d100_l10_res}\includegraphics[width=0.5\columnwidth]{./figures/gnuplot/noHotspot_100k/delta100_lambda10/response}}%
  \caption{Effect of $k$ with $\lambda=10$, $\Delta =100$ $ms$, and $\pi=10\%$ malicious intensity.}
  \label{fig:ibs_d100_l10}
                      \vspace{-3 mm}
\end{figure}
In this experiment, we evaluate the performance of PIMS with various values of $k$, i.e., 1, 5, 10, 15, and 20. We compare the results of PIMS\_BFA and PIMS\_BA with oneIB. Moreover, we study the performance of PIMS\_RA and PIMS\_SA to demonstrate the effectiveness of the proposed heuristics. Fig. \ref{fig:ibs_d100_l10} gives the performance of PIMS with $ \Delta=100 $ $ms$, $\lambda = 10$, and $\pi=10\%$, i.e., 500 malicious transactions. From Fig. \ref{fig:ibs_d100_l10}, increasing $k$ improves the performance of PIMS. In particular, the number of affected transactions is reduced by at least 5\%, 18\%, 26\%, and 48\% using SA, BFA, BA, and RA, respectively. Consequently, the average recovery time decreases as $k$ increases. However, PIMS incurs at least 20\%, 37\%, 50\%, and 63\% of response time overhead in SA, BFA, BA, and RA, respectively, as Fig. \ref{fig:ibs_d100_l10_res} illustrates. The reason is due to the increase in the number of boundary tuples. We note that the increase in the response time overhead is marginal as the value of $k$ increases beyond 10. 

Although PIMS\_BFA has the least reduction in recovery time as compared to PIMS\_BA and PIMS\_RA, PIMS\_BFA has the minimum response time overhead. We note that PIMS\_SA has less response time than PIMS\_BFA when $k<10$ because the number of boundary tuples generated by SA is less than BFA as explained in Section \ref{sec:heurPer}. However, the response time of PIMS\_SA increases dramatically for k$>$10 as the assignment skewness is higher. On the other hand, PIMS\_RA outperforms PIMS\_BFA and PIMS\_BA in terms of reducing the number of affected transactions, and thus the recovery time. PIMS\_RA incurs the highest overhead in response time. In conclusion, we note that PIMS\_BFA and PIMS\_BA produce a balanced performance in terms of recovery and response time as compared to PIMS\_RA and PIMS\_SA over a wide range of $k$ values.

In the next experiment, we study the effect of the delayed access mechanism on the performance of PIMS. We compare the performance of PIMS\_BFA, PIMS\_BA, PIMS\_RA, and PIMS\_SA with delay and without delay. From Fig. \ref{fig:noDelay_aff}, the number of affected transactions is larger when the delayed access mechanism is off. The reason is that the damage propagates across the IBs. However, the shortcoming of the delayed access mechanism is the increase in the response time overhead as in Fig. \ref{fig:noDelay_res}. The reason is the increase in the number of blocked transactions due to blocking boundary tuples. The overhead is more noticeable in the case of PIMS\_RA as compared to PIMS\_BFA. In conclusion, the delayed access mechanism allows to contain the damage and reduce the recovery time in the case of PIMS\_BFA and PIMS\_BA. Moreover, PIMS with delayed access mechanism maintains a reasonable response time overhead, and thus improves the overall availability. 

\begin{figure}[t!]
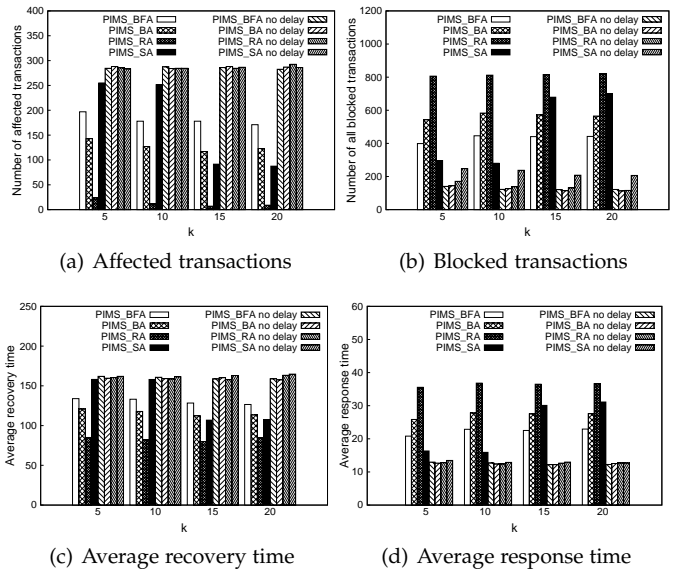

  \centering
  \subfigure[Affected transactions]{\label{fig:noDelay_aff}  \includegraphics[width=0.5\columnwidth]{./figures/gnuplot/noHotspot_100k/noDelay/affTx}}%
  \subfigure[Blocked transactions]{\label{fig:noDelay_blk}\includegraphics[width=0.5\columnwidth]{./figures/gnuplot/noHotspot_100k/noDelay/allBlocked}}
\subfigure[Average recovery time]{\label{fig:noDelay_rec}\includegraphics[width=0.5\columnwidth]{./figures/gnuplot/noHotspot_100k/noDelay/recovery}}%
  \subfigure[Average response time]{\label{fig:noDelay_res}\includegraphics[width=0.5\columnwidth]{./figures/gnuplot/noHotspot_100k/noDelay/response}}%
  \caption{Effect of the delayed access mechanism with $\Delta=100$ $ms$, $\lambda=10$, and $\pi=10\%$.}
  \label{fig:noDelay}
                      \vspace{-3 mm}
\end{figure}

\section{Related Work}
\label{sec:related}

Security measures in DBMSs includes the protection of data confidentiality, integrity, and availability \cite{nist2002risk,bertino2005database}. A broad span of research addresses the protection of data confidentiality in DBMSs including authorization, e.g., \cite{chaudhuri2007fine}, access control, e.g., \cite{sandhu1996role}, encryption, e.g., \cite{sarfraz2015dbmask}, and inference and disclosure control, e.g., \cite{chen2008protection}. The integrity risk data in DBMSs is jointly prevented by using access control mechanisms and semantic integrity constraints to verify the correctness of the database state after updates\cite{bertino2005database}. The availability of data is protected by providing fault-tolerance \cite{fault-tolerancegashi}, replication \cite{patino2005middle}, and intrusion detection techniques \cite{milenkoski2015evaluating}.

In the case of successful intrusion attacks, the effects of the committed transactions are undesirable. The methodology of undoing a committed transaction can be generally handled by one of two approaches: rollback or compensation. The rollback approach is achieved by rolling back all desirable and undesirable activities to a point that is free from damage \cite{mohan1992efficient}. On the other hand, compensation approach unwinds the effect of selective committed transactions by executing special compensating transactions. The compensation operations are either action-oriented or effect-oriented \cite{korth1990formal}. In this paper, we follow an effect-oriented compensation approach to recover from the damage caused by the malicious and affected transactions.

Several solutions have been proposed for intrusion recovery in database-backed applications. A generic intrusion-tolerant architecture for web-servers uses redundancy and diversification principles proposed in \cite{saidane2009design}. In \cite{chandra2011intrusion}, WARP is proposed to recover from intrusions in web-applications by rolling back the database and replaying subsequent legitimate actions to correct the state of the DBMS. In \cite{pardal2017rectify}, an intrusion recovery tool for database-backed applications running in Platform-as-a-Service clouds is proposed. The tool uses machine learning techniques to associate the application requests to the DBMS statements, and then uses existing recovery algorithms to recover from the damage in the DBMS. PIMS is designed as a middle-layer between the DBMS and the application that performs automatic intrusion response and recovery in the DBMS independently from the running applications.  

Previous work in intrusion recovery in DBMSs can be broadly classified into two categories: transaction-level and data-level approaches. In the transaction-level approach, the general direction is to selectively rollback or compensate for the damaged tuples. In \cite{ammann2002recovery}, a suite of recovery algorithms is proposed to unwind the effect of malicious transactions for offline and online recovery. In \cite{liu2004design,bai2009data}, the authors present ITDB and DTQR, respectively, that implement the recovery algorithms in \cite{ammann2002recovery} on top of a Commercial-Off-The-Shelf DBMS. In \cite{chiueh2008accurate}, a damage assessment and repair system, termed Phoenix, is introduced. The core component in Phoenix is the inter-transaction dependency tracking that maintains such persistent dependency information at run-time. On the other hand, data-dependency approach provides a flexible recovery at the object-level. In \cite{panda2002extended}, a damage assessment technique using data dependency analysis is proposed to obtain precise information about the set of corrupted data. PIMS uses a hybrid approach between data-level dependency and transaction-level approach to track the damage. In particular, the damage assessment is performed at the data-level, while the response and recovery procedures are performed at the transaction-level. Moreover, PIMS addresses the problem of prolonged online recovery procedure in \cite{ammann2002recovery}.

Date partitioning schemes are used to improve the availability and scalability of the DBMS. In \cite{curino2010schism}, a workload-aware approach for partitioning the data is proposed. The partitioning approach models the data objects as a graph that is then partitioned into $k$ balanced partitions such that the number of distributed transactions is minimized. In \cite{quamar2013sword}, a scalable workload-aware data placement that uses hyper-graph compression techniques to deal with large-scale datasets is proposed. Online partitioning techniques adaptively partition data based on emerging hotspots, workload skews, and load spikes. In \cite{turcu2016automated}, a methodology for using automatic data partitioning that prefers partitions with independent transactions is proposed. In \cite{taft2014store}, E-store is proposed that provides an elastic planning and reconfiguration system to mitigate the challenges paired with workload skews. None of the above partitioning scheme considers the security aspects of DBMS. The IB demarcation scheme partitions the workload with the objective to improve the availability by confining the damage caused by intrusion attacks.

\section{Conclusion}
\label{sec:conc}
In this paper, the problem of response and recovery of successful intrusion attacks on a DBMS is addressed. We propose PIMS, a data partitioning-based intrusion management system for DBMSs, that can endure intense malicious intrusion attacks. A new fine-grained dependency model that captures the intra-transaction and inter-transaction dependencies is introduced. We introduce a data partitioning scheme, termed IBs, with the objective to limit the extent of the damage into partitions. We formulate the IB demarcation problem as a cost-driven optimization problem and prove that IBDP is NP-hard, and propose efficient heuristic solutions. We present the architecture of PIMS and conduct various experiments to evaluate its performance. We show that although PIMS incurs response time overhead, the reduction in the number of affected transactions and the recovery time is up to 48\% and 52\%, respectively. 

\bibliographystyle{unsrt}

% if have a single appendix:

% or
%\appendix  % for no appendix heading
% do not use \section anymore after \appendix, only \section*
% is possibly needed

% use appendices with more than one appendix
% then use \section to start each appendix
% you must declare a \section before using any
% \subsection or using \label (\appendices by itself
% starts a section numbered zero.)
%

% use section* for acknowledgment
\ifCLASSOPTIONcompsoc
  % The Computer Society usually uses the plural form
  
%  
%  \section*{Acknowledgments}
%\else
%  % regular IEEE prefers the singular form
  \section*{Acknowledgment}
This work is supported by Northrop Grumman.
%\fi
%
%
%The authors would like to thank...

% Can use something like this to put references on a page
% by themselves when using endfloat and the captionsoff option.
\ifCLASSOPTIONcaptionsoff
  \newpage
\fi

\bibliographystyle{IEEEtran}
\bibliography{references}

\begin{thebibliography}{10}

\bibitem{chen2014data}
CL~Philip Chen and Chun-Yang Zhang.
\newblock Data-intensive applications, challenges, techniques and technologies:
  A survey on big data.
\newblock {\em Information Sciences}, 275:314--347, 2014.

\bibitem{stoneburner2002sp}
Gary Stoneburner, Alice~Y Goguen, and Alexis Feringa.
\newblock Risk management guide for information technology systems.
\newblock {\em NIST Report}, 2002.

\bibitem{bertino2005database}
Elisa Bertino and Ravi Sandhu.
\newblock Database security-concepts, approaches, and challenges.
\newblock {\em IEEE Trans. on Dep. and Sec. Comp.}, 2(1):2--19, 2005.

\bibitem{kamra2011design}
Ashish Kamra and Elisa Bertino.
\newblock Design and implementation of an intrusion response system for
  relational databases.
\newblock {\em IEEE Trans. on Knowl. and Data Eng.}, 23(6):875--888, 2011.

\bibitem{ammann2002recovery}
Paul Ammann, Sushil Jajodia, and Peng Liu.
\newblock Recovery from malicious transactions.
\newblock {\em IEEE Trans. on Knowl. and Data Eng.}, 14(5):1167--1185, 2002.

\bibitem{liu2004design}
Peng Liu, Jiwu Jing, Pramote Luenam, Ying Wang, Lunquan Li, and Supawadee
  Ingsriswang.
\newblock The design and implementation of a self-healing database system.
\newblock {\em Journal of Intelligent Information Systems}, 23(3):247--269,
  2004.

\bibitem{shebaro2013postgresql}
Bilal Shebaro, Asmaa Sallam, Ashish Kamra, and Elisa Bertino.
\newblock Postgresql anomalous query detector.
\newblock In {\em Proc. of the 16th EDBT}, pages 741--744. ACM, 2013.

\bibitem{bernstein1987concurrency}
Philip~A Bernstein, Vassos Hadzilacos, and Nathan Goodman.
\newblock Concurrency control and recovery in database systems.
\newblock 1987.

\bibitem{srinivasan2017web}
Satish~M Srinivasan and Raghvinder~S Sangwan.
\newblock Web app security: A comparison and categorization of testing
  frameworks.
\newblock {\em IEEE Software}, 34(1):99--102, 2017.

\bibitem{mohan1992efficient}
C~Mohan, Hamid Pirahesh, and Raymond Lorie.
\newblock {\em Efficient and flexible methods for transient versioning of
  records to avoid locking by read-only transactions}, volume~21.
\newblock ACM, 1992.

\bibitem{korth1990formal}
Henry~F Korth, Eliezer Levy, and Abraham Silberschatz.
\newblock {\em A formal approach to recovery by compensating transactions}.
\newblock University of Texas at Austin, Department of Computer Sciences, 1990.

\bibitem{panda2002extended}
Brajendra Panda and Kazi~Asharful Haque.
\newblock Extended data dependency approach: a robust way of rebuilding
  database.
\newblock In {\em Proc. of the 2002 ACM symp. on Applied comp.}, pages
  446--452. ACM, 2002.

\bibitem{browder2002virtual}
Kristy Browder and M~Davidson.
\newblock The virtual private database in oracle9ir2.
\newblock {\em Oracle Tech. White Paper, Oracle Corporation}, 500, 2002.

\bibitem{rask2005implementing}
Art Rask, Don Rubin, and Bill Neumann.
\newblock Implementing row-and cell-level security in classified databases
  using sql server 2005.
\newblock {\em MS SQL Server Technical Center}, 2005.

\bibitem{tpcc}
Transaction~Performance Coundeling.
\newblock Tpc-c benchmark (revision 5.9.0).
\newblock June 2007.

\bibitem{alomari2008cost}
Mohammad Alomari, Michael Cahill, Alan Fekete, and Uwe Rohm.
\newblock The cost of serializability on platforms that use snapshot isolation.
\newblock In {\em 2008 IEEE 24th ICDE}, pages 576--585. IEEE, 2008.

\bibitem{cooper2010benchmarking}
Brian~F Cooper, Adam Silberstein, Erwin Tam, Raghu Ramakrishnan, and Russell
  Sears.
\newblock Benchmarking cloud serving systems with ycsb.
\newblock In {\em Proc. of the 1st ACM symposium on Cloud computing}, pages
  143--154. ACM, 2010.

\bibitem{difallah2013oltp}
Djellel~Eddine Difallah, Andrew Pavlo, Carlo Curino, and Philippe
  Cudre-Mauroux.
\newblock Oltp-bench: An extensible testbed for benchmarking relational
  databases.
\newblock {\em Proc. of the VLDB Endowment}, 7(4):277--288, 2013.

\bibitem{erdos1960evolution}
Paul Erdos and Alfr{\'e}d R{\'e}nyi.
\newblock On the evolution of random graphs.
\newblock {\em Publ. Math. Inst. Hung. Acad. Sci}, 5(1):17--60, 1960.

\bibitem{jain1999throughput}
Raj Jain, Arjan Durresi, and Gojko Babic.
\newblock Throughput fairness index: An explanation.
\newblock Technical report, Dep. of CIS, The Ohio State University, 1999.

\bibitem{bai2009data}
Kun Bai and Peng Liu.
\newblock A data damage tracking quarantine and recovery (dtqr) scheme for
  mission-critical database systems.
\newblock In {\em Proc. of the 12th EDBT}, pages 720--731. ACM, 2009.

\bibitem{nist2002risk}
G~Stonebumer NIST, A~Goguen, and A~Fringa.
\newblock Risk management guide for information technology systems.
\newblock {\em Recommendations of the National Institute of Standards and
  Technology}, 2002.

\bibitem{chaudhuri2007fine}
Surajit Chaudhuri, Tanmoy Dutta, and S~Sudarshan.
\newblock Fine grained authorization through predicated grants.
\newblock In {\em 2007 IEEE 24th ICDE}, pages 1174--1183. IEEE, 2007.

\bibitem{sandhu1996role}
Ravi Sandhu, Edward Coyne, Hal Feinstein, and Charles Youman.
\newblock Role-based access control models.
\newblock {\em Computer}, 29(2):38--47, 1996.

\bibitem{sarfraz2015dbmask}
Muhammad~I Sarfraz, Mohamed Nabeel, Jianneng Cao, and Elisa Bertino.
\newblock Dbmask: fine-grained access control on encrypted relational
  databases.
\newblock In {\em Proc. of the 5th ACM Conference on Data and Application
  Security and Privacy}, pages 1--11. ACM, 2015.

\bibitem{chen2008protection}
Yu~Chen and Wesley~W Chu.
\newblock Protection of database security via collaborative inference
  detection.
\newblock In {\em Intelligence and Security Informatics}, pages 275--303.
  Springer, 2008.

\bibitem{fault-tolerancegashi}
I.~Gashi, P.~Popov, and L.~Strigini.
\newblock Fault tolerance via diversity for off-the-shelf products: A study
  with sql database servers.
\newblock {\em IEEE Trans. on Dep. and Sec. Comp.}, 4(4):280--294, Oct 2007.

\bibitem{patino2005middle}
Marta Pati{\~n}o-Martinez, Ricardo Jim{\'e}nez-Peris, Bettina Kemme, and
  Gustavo Alonso.
\newblock Middle-r: Consistent database replication at the middleware level.
\newblock {\em ACM Trans on Comp. Sys. (TOCS)}, 23(4):375--423, 2005.

\bibitem{milenkoski2015evaluating}
Aleksandar Milenkoski, Marco Vieira, Samuel Kounev, Alberto Avritzer, and
  Bryan~D Payne.
\newblock Evaluating computer intrusion detection systems: A survey of common
  practices.
\newblock {\em ACM Computing Surveys (CSUR)}, 48(1):12, 2015.

\bibitem{saidane2009design}
Ayda Saidane, Vincent Nicomette, and Yves Deswarte.
\newblock The design of a generic intrusion-tolerant architecture for web
  servers.
\newblock {\em IEEE Trans. on Dep. and Sec. Comp.}, 6(1):45--58, 2009.

\bibitem{chandra2011intrusion}
Ramesh Chandra, Taesoo Kim, Meelap Shah, Neha Narula, and Nickolai Zeldovich.
\newblock Intrusion recovery for database-backed web applications.
\newblock In {\em Proc. of the Twenty-Third ACM Symposium on Operating Systems
  Principles}, pages 101--114. ACM, 2011.

\bibitem{pardal2017rectify}
David R Matos Miguel~L Pardal and Miguel Correia.
\newblock Rectify: Black-box intrusion recovery in paas clouds.
\newblock In {\em Proc. of the 18th International Middleware Conference}. ACM,
  2017.

\bibitem{chiueh2008accurate}
Tzi-cker Chiueh and Shweta Bajpai.
\newblock Accurate and efficient inter-transaction dependency tracking.
\newblock In {\em 2008 IEEE 24th ICDE}, pages 1209--1218. IEEE, 2008.

\bibitem{curino2010schism}
Carlo Curino, Evan Jones, Yang Zhang, and Sam Madden.
\newblock Schism: a workload-driven approach to database replication and
  partitioning.
\newblock {\em Proc. of the VLDB Endowment}, 3(1-2):48--57, 2010.

\bibitem{quamar2013sword}
Abdul Quamar, K~Ashwin Kumar, and Amol Deshpande.
\newblock Sword: scalable workload-aware data placement for transactional
  workloads.
\newblock In {\em Proc. of the 16th International Conference on Extending
  Database Technology}, pages 430--441. ACM, 2013.

\bibitem{turcu2016automated}
Alexandru Turcu, Roberto Palmieri, Binoy Ravindran, and Sachin Hirve.
\newblock Automated data partitioning for highly scalable and strongly
  consistent transactions.
\newblock {\em IEEE Trans. on Par. and Dist. Sys.}, 27(1):106--118, 2016.

\bibitem{taft2014store}
Rebecca Taft, Essam Mansour, Marco Serafini, Jennie Duggan, Aaron~J Elmore,
  Ashraf Aboulnaga, Andrew Pavlo, and Michael Stonebraker.
\newblock E-store: Fine-grained elastic partitioning for distributed
  transaction processing systems.
\newblock {\em Proc. of the VLDB Endowment}, 8(3):245--256, 2014.

\end{thebibliography}

\clearpage
\end{document}